
\documentclass[nohyperref]{article}

\usepackage{microtype}
\usepackage{graphicx}
\usepackage{subfigure}
\usepackage{booktabs} 

\usepackage{hyperref}


\usepackage[accepted]{icml2022}

\usepackage{tikz}
\usepackage{pgfplots}
\usetikzlibrary{pgfplots.groupplots}
\pgfplotsset{compat=1.3}

\usepackage{multirow}

\usepackage{array}
\usepackage{makecell}
\newcolumntype{L}[1]{>{\raggedright\let\newline\\\arraybackslash\hspace{0pt}}m{#1}}
\newcolumntype{C}[1]{>{\centering\let\newline\\\arraybackslash\hspace{0pt}}m{#1}}
\newcolumntype{R}[1]{>{\raggedleft\let\newline\\\arraybackslash\hspace{0pt}}m{#1}}

\newcommand{\reals}{\mathbb{R}}
\newcommand{\E}{\mathbb{E}}

\newcommand{\SC}[1]{{\color{black}#1}}
\newcommand{\JC}[1]{{\color{black}#1}}

\newcommand{\JCC}[1]{{\color{black}#1}}

\newcommand{\domainX}{\mathcal{X}}
\newcommand{\domainH}{\mathcal{H}}
\newcommand{\domainS}{\mathcal{S}}
\newcommand{\domainB}{\mathcal{B}}

\newcommand{\loss}{\mathcal{L}}
\newcommand{\dist}{\mathcal{D}}

\newcommand{\tr}{\text{Tr}}

\newcommand{\phead}{\textbf}

\newenvironment{customlegend}[1][]{%
    \begingroup
    \csname pgfplots@init@cleared@structures\endcsname
    \pgfplotsset{#1}%
}{%
    \csname pgfplots@createlegend\endcsname
    \endgroup
}%

\def\addlegendimage{\csname pgfplots@addlegendimage\endcsname}

\usepackage{amsmath}
\usepackage{amssymb}
\usepackage{mathtools}
\usepackage{amsthm}

\usepackage{amsfonts}
\usepackage{bm}
\usepackage{bbm}

\usepackage[capitalize,noabbrev]{cleveref}

\theoremstyle{plain}
\newtheorem{thm}{Theorem}[section]
\newtheorem{prop}[thm]{Proposition}
\newtheorem{lem}[thm]{Lemma}
\newtheorem{cor}[thm]{Corollary}
\theoremstyle{definition}

\newtheorem{assumption}[thm]{Assumption}
\theoremstyle{remark}

\usepackage[textsize=tiny]{todonotes}

\icmltitlerunning{Task-aware Privacy Preservation for Multi-dimensional Data}

\begin{document}

\twocolumn[
\icmltitle{Task-aware Privacy Preservation for Multi-dimensional Data}



\icmlsetsymbol{equal}{*}

\begin{icmlauthorlist}
\icmlauthor{Jiangnan Cheng}{cornell}
\icmlauthor{Ao Tang}{cornell}
\icmlauthor{Sandeep Chinchali}{utexas}
\end{icmlauthorlist}

\icmlaffiliation{cornell}{School of Electrical and Computer Engineering, Cornell University, Ithaca, NY}
\icmlaffiliation{utexas}{Department of Electrical and Computer Engineering, The University of Texas at Austin, Austin, TX}

\icmlcorrespondingauthor{Jiangnan Cheng}{jc3377@cornell.edu}
\icmlcorrespondingauthor{Ao Tang}{atang@cornell.edu}
\icmlcorrespondingauthor{Sandeep Chinchali}{sandeepc@utexas.edu}

\icmlkeywords{Machine Learning, ICML}

\vskip 0.3in
]



\printAffiliationsAndNotice{}  

\begin{abstract}
Local differential privacy (LDP) can be adopted to anonymize richer user data attributes that will be input to sophisticated machine learning (ML) tasks. 
However, today's LDP approaches are largely \textit{task-agnostic} and often lead to severe performance loss -- they simply inject noise to all data attributes according to a given privacy budget, regardless of what features are most relevant for the ultimate task. 
In this paper, we address how to significantly improve the ultimate task performance with multi-dimensional user data by considering a task-aware privacy preservation problem. The key idea is to use an encoder-decoder framework to learn (and anonymize) a \textit{task-relevant} latent representation of user data.  We obtain an analytical near-optimal solution for the linear setting with mean-squared error (MSE) task loss. We also provide an approximate solution through a gradient-based learning algorithm for general nonlinear cases. Extensive experiments demonstrate that our task-aware approach significantly improves ultimate task accuracy compared to standard benchmark LDP approaches with the same level of privacy guarantee. 
\end{abstract}

\section{Introduction}
\label{sec:intro}
In recent years, there has been a tremendous growth in the volume of available data for machine learning (ML) tasks, leading to increasing emphasis on protecting user privacy. Differential privacy (DP) \citep{dwork2006calibrating,dwork2014algorithmic} is a state-of-the-art technique for data privacy, and its local variant -- local differential privacy (LDP) \citep{kasiviswanathan2011can} -- provides stronger privacy guarantees for individual users without dependence on any trusted third party. In practice, LDP has been successfully deployed in the products of companies like Google \citep{erlingsson2014rappor}, Apple \citep{apple2021dp}, and Microsoft \citep{ding2017collecting} for some basic frequency or histogram estimation tasks where raw user data is restricted to an $n$-bit discrete variable.  

In the future, LDP has the promising potential to be adopted in more complex scenarios \citep{hassan2019differential,dankar2013practicing,zhao2014achieving,cortes2016differential} (e.g., health care, power grids, Internet of Things) that feature richer user data attributes that feed into more sophisticated downstream ML tasks \cite{cheng2021data,nakanoya2021co}. In such cases, today's standard task-agnostic LDP approaches may not be ideal. For example, consider complex user data that must be anonymized before being passed into a ML task function, such as a neural network classifier for credit scores. A standard approach would be to simply perturb the data by adding artificial noise whose scale depends on the sensitivity of user data (i.e. worst-case variation among a user population) and a given privacy budget, \textit{regardless} of what ultimate task the anonymized data will be used for. However, as the dimension and variability of user data inevitably grows, today's methods would generally have to increase the scale of noise to provide the same LDP guarantee, even though many data attributes might be highly variable across a user population, but minimally \textit{relevant} for a task. As a consequence, one often adds excessive noise to all data attributes, which can severely degrade an ultimate task's performance. 


To address these challenges, this paper introduces a fundamentally different \textit{task-aware} LDP approach. Our method improves the performance of ML tasks that operate on multi-dimensional user data while still guaranteeing the same levels of privacy. 
Our key technical insight is to characterize the dependence of task performance on various user data attributes, which guides how we \textit{learn} a concise, task-relevant encoding (i.e. latent representation) of user data. Then, for the same privacy budget, we can directly expose and perturb only the task-relevant encoding rather than raw user data, which often allows us to add less noise and thereby improve task accuracy (see Section \ref{sec:motivation} for a concrete example).
Crucially, user privacy is guaranteed under the same privacy budget according to the post-processing immunity of DP \citep{dwork2014algorithmic} (i.e., one cannot make the output of a privacy algorithm less differentially private without additional knowledge). As such, an adversary cannot decode the anonymized latent representation to reduce the level of privacy.
Our method allows us to learn and expose only high-valued data attributes and flexibly adjust their signal-to-noise ratio based on their importance to a task. 
Moreover, when different data attributes are inter-dependent, task-aware LDP preservation is even more promising in that we can consider the utilities of the underlying orthogonal bases through principal component analysis (PCA) \citep{dunteman1989principal}, instead of the raw data attributes.

\phead{Contributions.} In light of prior work, our contributions are three-fold. First, we propose a task-aware privacy preservation problem 
\JC{ 
in which the effect of noise perturbation to preserve LDP is effectively considered, based on an encoder-decoder framework (Section \ref{sec:problem_formulation}). 
}
Second, in terms of task-aware privacy preservation, we obtain an analytical near-optimal solution for a linear setting and mean-squared error (MSE) task loss, and provide a gradient-based learning algorithm for more general settings (Section \ref{sec:analysis}). Third, we validate the effectiveness of our task-aware approach through three real-world experiments, which show our task-aware approach outperforms the benchmark approaches on overall task loss under various LDP budgets by as much as \JC{$70.0\%$} (Section \ref{sec:evaluation}). 
All the proofs are given in the Appendix. 

\phead{Related Work.} 
1) \underline{Utility maximization in DP/LDP}. Most theoretical DP/LDP research either provides a utility upper bound for a given privacy budget under some weak assumptions \citep{alvim2011differential,kenthapadi2012privacy,duchi2013local,makhdoumi2013privacy,hardt2010geometry,wang2019subsampled,acharya2020context}, or designs optimal privacy preservation mechanisms in some specific use cases \citep{mcsherry2007mechanism,wasserman2010statistical,friedman2010data,xiao2010differential,thakurta2013differentially,kairouz2014extremal,geng2015optimal,liu2016dependence,yiwen2018utility,joseph2019role,gondara2020differentially,wang2020empirical}. To the best of our knowledge, with respect to \textit{multi-dimensional user data}, \citet{murakami2019utility}, \citet{wang2019collecting}, \citet{mansbridge2020representation}, \citet{chen2021perceptual}, \citet{li2015matrix} and \citet{mcmahan2022private} are the only similar works to ours.
\citet{murakami2019utility} develops a utility-maximizing LDP framework under the assumption that some data attributes may not be \SC{privacy-sensitive}, and hence the utility improvement is mainly achieved by providing privacy guarantees \SC{for only a subset of attributes}. 
\citet{wang2019collecting} mainly focuses on developing novel LDP mechanisms for multi-dimensional data with an objective of minimizing the worst-case noise variance, which naturally \SC{improves} utility.
\citet{mansbridge2020representation} preserves LDP for high-dimensional data with unsupervised learning based on a variational autoencoder model, which is a task-agnostic approach.
\JC{
\citet{chen2021perceptual} preserves DP for images by adding noise to a latent representation learned through back-propagation of task loss, where the effect of noise perturbation is not considered.
}
\citet{li2015matrix} and \citet{mcmahan2022private} consider linear transformation and preserve DP through matrix factorization.
The key differences of our work are: i) we don't make additional assumptions on the sensitivity of user data, ii) our task-aware approach achieves a better task performance \SC{than standard LDP benchmarks} by directly studying the \textit{dependencies} between the task objective and different attributes of user data, iii) we effectively capture the effect of noise perturbation resulting from privacy requirements, and we also show matrix factorization is not optimal for linear transformation. 
2) \underline{End-to-end (E2E) learning}. There have been a wide variety of works training a cascade of deep neural networks (DNNs) through E2E learning, where a task-specific output is directly predicted from the raw inputs \citep{muller2006off,wang2012end,donti2017task,zhou2018voxelnet,amos2018differentiable}. Our work also follows such a practice, but introduces an LDP guarantee while improving the task performance.
3) \underline{DP in deep learning}. The popularity of deep learning also draws great attention to DP preservation therein \citep{song2013stochastic,shokri2015privacy,abadi2016deep,phan2016differential,papernot2016semi,phan2017adaptive,mcmahan2018general,wang2018not,phan2019heterogeneous,arachchige2019local,liu2020fedsel,mireshghallah2020principled,bu2020deep}. Our work is fundamentally different. Instead of preserving LDP \textit{during the learning process}, we use learning as a tool to find the salient representation that improves the task performance under a given privacy budget. \JCC{In other words, we don't perturb the gradient for back-propagation but perturb the representation to guarantee LDP. Furthermore,} we don't specifically deal with privacy preservation during the \textit{offline} training process, which requires some ground truth user data (\SC{e.g., from a small set of consenting volunteers)}. 
However, LDP of user data is guaranteed after a trained model is deployed \textit{online}.

\section{Background and Motivating Example}
\label{sec:motivation}
\subsection{Background}

\phead{$\epsilon$-LDP \citep{kasiviswanathan2011can}.} Let $x \in \reals^n$ be an individual data sample, and $\domainX$ be the domain of $x$, which is assumed to be a compact subset of $\reals^n$. A randomized algorithm $\mathcal{M}: \domainX \mapsto \reals^Z$ is said to satisfy $\epsilon$-LDP with \textit{privacy budget} $\epsilon>0$, if $\forall x, x' \in \domainX, \mathcal{S}  \subseteq \text{im} ~\mathcal{M}$, we have
\begin{align}
\Pr[\mathcal{M}(x) \in \mathcal{S}] \leq e^\epsilon \Pr[\mathcal{M}(x') \in  \mathcal{S}].  
\end{align}
Essentially, when $\epsilon$ is small, one cannot readily differentiate whether the input of $\mathcal{M}$ is an individual user $x$ or $x'$ based on $\mathcal{M}$'s outcome. 

\phead{Laplace Mechanism \citep{dwork2014algorithmic}.} To release a sensitive function $g: \domainX \mapsto \reals^Z$ under $\epsilon$-LDP, $\forall \epsilon > 0$, the Laplace mechanism is a widely used mechanism which adds Laplace noise to function $g$:
\begin{align}
\mathcal{M}_\text{Lap}(x, g, \epsilon) & = g(x) + \text{Lap}^Z(\mu = 0, b = \frac{\Delta_1 g}{\epsilon}), \label{eq:lap}
\end{align}
where $\text{Lap}^Z(\mu, b)$ is a $Z$-dimensional vector whose elements are i.i.d. Laplace random variables with mean $\mu$ and scale $b$, which \SC{leads} its variance to be $2b^2$. And $\Delta_1 g = \max_{x, x' \in \domainX} \| g(x) - g(x')\|_1$ measures the sensitivity of $g$ under the $\ell_1$ norm. 

For concreteness, the analysis and evaluation of this paper focus on the Laplace mechanism although the central idea of learning task-relevant data representations is applicable to other noise-adding mechanisms as well.

\subsection{Motivating Example}




\begin{figure}[t]
\centering
\includegraphics[scale=0.7]{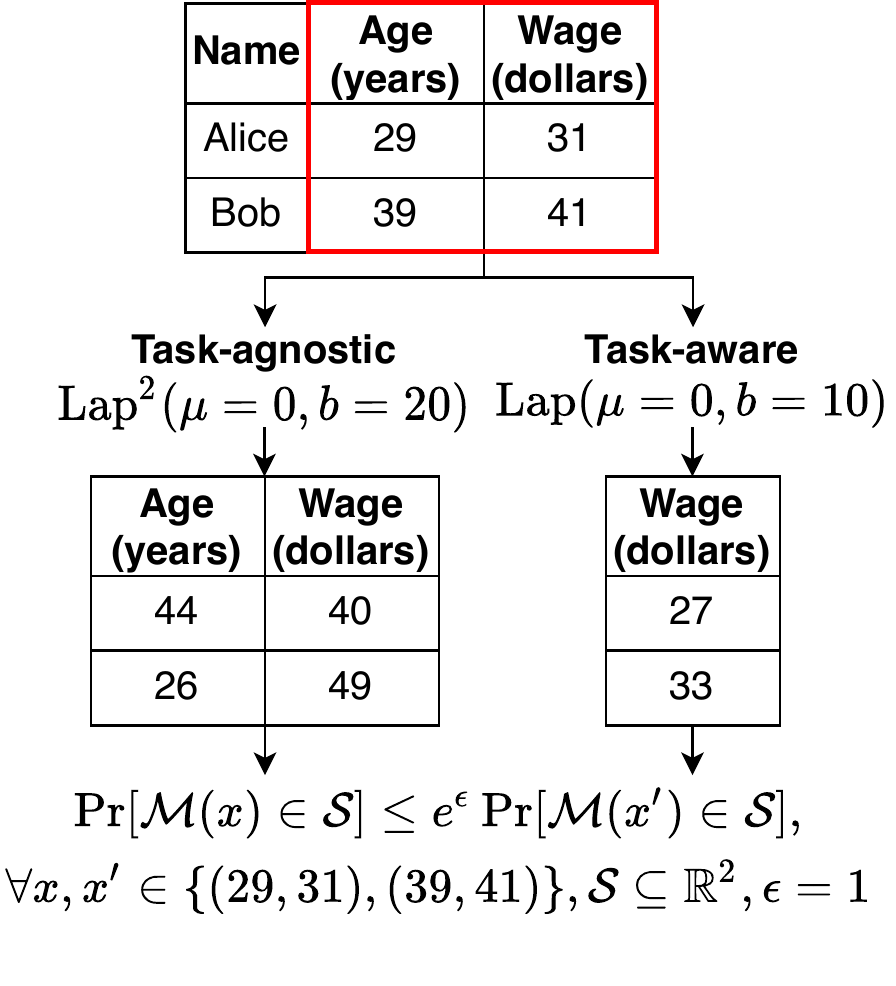}
\caption{\textbf{Motivating example}. A task-aware approach (right) is more ideal than a task-agnostic approach (left) in terms of a mean wage estimation task, since the former perturbs the wage attribute with a smaller noise while guaranteeing the same LDP budget $\epsilon$.}
\label{fig:motivation}
\end{figure}


Consider an example shown in Fig. \ref{fig:motivation}. For simplicity, we only consider an example with two people, Alice and Bob, and two data attributes, age and wage. Suppose we need to preserve $\epsilon$-LDP for each person with $\epsilon = 1$ and our task is to estimate the mean wage as accurate as possible. A straightforward task-agnostic approach will directly expose both the two data attributes, and add Laplace noise with scale $b = 20$ to each attribute. However, a task-aware approach will expose only the wage attribute and add Laplace noise with scale $b = 10$. Both the approaches guarantee LDP under the same budget ($\epsilon=1$), but the wage attribute given by the task-aware approach is less noisy, and we can expect that the corresponding estimated mean wage (i.e. the ultimate task objective) is close to the real value with a higher probability.

In more complex scenarios, such as when each data attribute is not redundant but is valued differently in terms of the considered task or data attributes are dependent but not perfectly correlated, etc., the optimal solution will not be as straightforward as the given example and will be explored in the following sections.

\section{Problem Formulation}
\label{sec:problem_formulation}


\begin{figure}[t]
\centering
\includegraphics[scale=0.6]{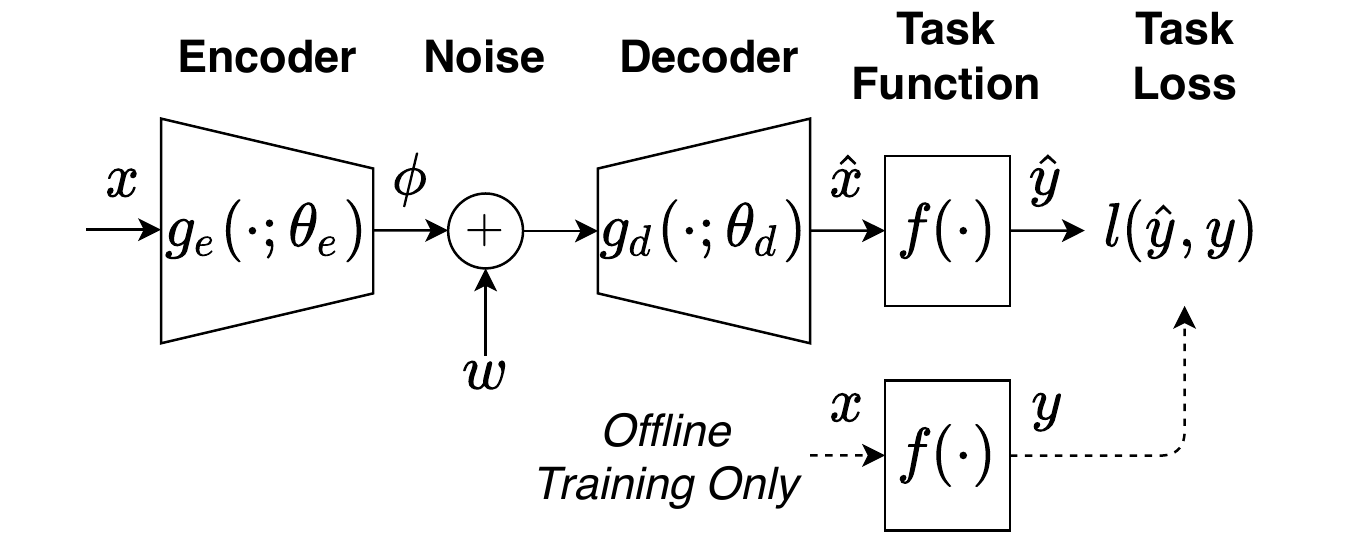}
\caption{Overall architecture of the task-aware privacy preservation problem.}
\label{fig:architecture}
\end{figure}

We now introduce the task-aware privacy preservation problem\footnote{The encoder-decoder framework for LDP preservation shares similarity with the architecture in \citet{mansbridge2020representation}, yet we explicitly focus on the loss function associated with a \textit{task-specific} output.} \SC{depicted in Fig. \ref{fig:architecture}}. Let $y = f(x) \in \reals^m$ denote the task output associated with each ground truth data sample $x$, where $f$ represents the task function. To guarantee $\epsilon$-LDP for each data sample $x$, its true value should never be exposed to the task function. Instead, an estimate of $x$, denoted by $\hat{x}$, is used as the input to the task function with the corresponding task output $\hat{y} = f(\hat{x})$. The objective is to minimize the overall task loss
\JC{
$\loss = \E[l(\hat{y}, y)]$ due to the difference between $\hat{x}$ and $x$, where $x$ follows distribution $\dist_x$,
}
and $l$ is a task loss function that captures the discrepancy between task output $\hat{y}$ and $y$, such as the common $\ell_2$ and cross-entropy loss.
Note that we don't specifically deal with privacy preservation during the \textit{offline} training process, and the ground truth $x$ is used to calculate $y$ and $\loss$. However, LDP of user data is guaranteed after a trained model is deployed \textit{online}.

More concretely, $x$ is first mapped to a latent representation $\phi \in \reals^Z$ through an encoder function $\phi = g_e(x; \theta_e)$, where $\theta_e$ are \SC{a set of encoder parameters}. $\phi$ is then perturbed  by a Laplace noise vector $w \in \reals^Z$. That is, $g_e$ is treated as the sensitive function $g$ in Eq. (\ref{eq:lap}). Next, $\hat{x}$ is reconstructed from $\phi + w$ using a decoder function $\hat{x} = g_d(\phi+w; \theta_d)$ where $\theta_d$ are \SC{a set of decoder parameters}. \JCC{Moreover, in reality the encoder is deployed at the end of each individual user and in general has to be lightweight (e.g., linear or one hidden-layer neural network).} 

The optimal task-aware $\hat{x}$ minimizes $\loss$ while preserving $\epsilon$-LDP. In other words, the task-aware privacy preservation problem aims to co-design encoder and decoder, i.e., find proper values for $Z$, $\theta_e$ and $\theta_d$, such that $\loss$ is minimized and $\epsilon$-LDP is preserved. Formally, we have
\begin{align}
\min_{Z, \theta_e, \theta_d} \quad & \loss = \E_{x,w} [l(\hat{y}, y)],\label{eq:ldp_problem_begin}\\
\text{s.t.} \quad & y = f(x), \\
&\hat{y} = f(g_d(g_e(x; \theta_e)+w; \theta_d)),\\
& x \sim \dist_x, w \sim \text{Lap}^Z(0, \frac{\Delta_1 g_e}{\epsilon}). \label{eq:ldp_problem_end}
\end{align}

The difficulty of our task-aware LDP problem mainly comes from the discrepancy of the measurement of overall task loss $\loss$, which depends on $\dist_x$ and captures the \textit{average} performance, and the mechanism of preserving LDP, which depends on $X$ and focuses only on the \textit{worst-case} privacy guarantee.

\phead{Benchmarks.}
\JC{
We now describe two natural approaches to preserve $\epsilon$-LDP. First, a \textbf{task-agnostic approach} adds noise directly to the normalized $x$\footnote{We may either normalize each dimension independently when $x$ is a multi-variate random variable, or normalize all the dimensions jointly when $x$ is a uni-variate time-series.}. For convenience we assume $x$ is already normalized, and we have $Z = n$ and $g_e(x) = x$. Second, a \textbf{privacy-agnostic approach} adds noise to $\phi$ obtained by considering the problem defined in Eq. (\ref{eq:ldp_problem_begin})-(\ref{eq:ldp_problem_end}) with a pre-determined $Z \leq n$ and $w$ being a zero vector instead. That is, the privacy-preservation part is neglected when designing the encoder, and hence a proper $Z$ needs to be pre-determined or one would always conclude a larger $Z$ (under which more information can be delivered when noise is absent) is better. Both two benchmark approaches still need to determine the optimal decoder parameters $\theta_d$ for input $\phi + w$. Note that for the task-agnostic approach even though $g_e$ is an identity function, the corresponding optimal $g_d$ is usually not an identity function, as exemplified in Section \ref{subsubsec:task_loss_task_agnostic}.
}


\section{Analysis}
\label{sec:analysis}
In this section, we solve the task-aware privacy preservation problem. Assuming a linear model and MSE task loss, we are able to find near-optimal analytical results, which shed clear insight on how to co-design an encoder and decoder. We then move to general settings and present a gradient-based learning algorithm which demonstrates strong empirical performance. 

\subsection{Linear Model with MSE Task Loss} \label{subsec:linear_model}

In this subsection, we consider a linear model with MSE task loss. More specifically, encoder function $g_e$, decoder function $g_d$, and task function $f$ are assumed to be linear functions in their corresponding inputs, and the loss function is $l = \|\hat{y} - y\|^2_2$. The task function $f$ can then be expressed as $f(x) = K x$, where $K \in \reals^{m \times n}$ is the task matrix. 

\phead{Practicality of the Setting.} Linear transformation is a common encoding and decoding approach for many dimensionality-reduction techniques, such as PCA. And $\ell_2$ task loss is widely used in many application scenarios. 
For example, given $N$ samples of $x$, suppose we want to estimate the mean value of these samples in a few directions, given by task matrix $K$. Then the sum of the variance of these estimates by using $\hat{x}$ instead of $x$ will be 
\JC
{
$\frac{1}{N}\E_{x,w}[\|K(\hat{x} - x)\|^2_2]$, so $\loss = \E_{x,w}[\|K(\hat{x} - x)\|^2_2] = \E_{x,w}[\|\hat{y} - y\|^2_2]$ is a natural objective function.
}

\phead{Contents.} We first summarize the key results. For our task-aware approach, the optimal decoder is first determined in Proposition \ref{prop:opt_decoder}, and then the optimal encoder and the corresponding optimal loss is formulated in Proposition \ref{prop:opt_V}-\ref{prop: opt_sigma} under Assumption \ref{assumption:boundary_h}. We then relax Assumption \ref{assumption:boundary_h} and provide lower and upper bounds for the optimal loss in Theorem \ref{thm:opt_loss_bound} and an approximate solution. 
All the proofs are given in the Appendix.


\phead{Detailed Analysis.} We start our analysis with a few definitions. First, without loss of generality, we assume the covariance matrix of $x-\mu_x$, i.e., $\E[(x-\mu_x)(x-\mu_x)^\top]$, is positive definite, where $\mu_x \in \reals^n$ is the mean vector of $x$. This assumption guarantees $x$ cannot be linearly transformed to a low-dimensional representation without losing information. 

We then factorize $\E[(x-\mu_x)(x-\mu_x)^\top]$ into $L L^\top$ through Cholesky decomposition, where $L \in \reals^{n \times n}$ is a lower triangular matrix with positive diagonal entries. For analytical convenience, we let $h = L^{-1}(x-\mu_x)$, which can be viewed as another representation of $x$, with mean $\mu_h = \bm{0}$ and covariance matrix $\Sigma_{hh} = I$. Let $\dist_h$ denote the distribution of $h$, and $\domainH = \{L^{-1}(x-\mu_x) | x \in \domainX\}$ denote the compact set that contains all the possible values of $h \sim \dist_h$. Since $K (\hat{x} - x) = P (\hat{h} - h)$, where $P = K L$, working with data representation $h$ with task matrix $P$ is equivalent to using $x$ and $K$. Considering zero-centered $h$ instead of original $x$ saves us from considering the constant terms in the linear encoder and decoder functions\footnote{In particular, the $\ell_1$-sensitivity of a linear encoder function doesn't change when the constant term is zero, i.e., $\Delta_1 g_e = \Delta_1 (g_e + c), \forall c \in \reals^Z$.}.

Let $E \in \reals^{Z \times n}$ and $D \in \reals^{n \times Z}$  denote the encoder and decoder matrix associated with $h$, i.e., $\phi = E h$ and $\hat{h} = D(E h+w)$. Without loss of generality, we let $Z \geq n$ and allow some rows of $E$ to be zero. Equivalently, based on the relationship between $x$ and $h$, we have $\phi = E L^{-1} (x-\mu_x)$ and $\hat{x}-\mu_x = L D(E L^{-1}(x-\mu_x)+w)$. We denote the covariance matrix of $w$ by $\Sigma_{ww}$, and it can be expressed as $\Sigma_{ww} = \sigma_w^2 I$, where $\sigma_w^2$ is the variance of the noise added to each dimension of $\phi$.


We first determine the optimal decoder $D$ that minimizes $\loss$ for a given encoder $E$, which is given by the following proposition. In particular, the optimal decoder $D$ is not the Moore-Penrose inverse\footnote{In matrix factorization \citep{li2015matrix,mcmahan2022private}, the decoder matrix is however constrained to be the Moore-Penrose inverse of the encoder matrix.} of encoder $E$.

\begin{prop}[Optimal decoder $D$ that minimizes $\loss$]\label{prop:opt_decoder}
An optimal decoder $D$ that minimizes $\loss$ for a given encoder $E$ and $\sigma_w^2$ can be expressed as $D = E^\top (E E^\top + \sigma_w^2 I)^{-1}$, and corresponding $\loss$ is
\begin{align}
\loss = \tr(P^\top P) - \tr(P^\top P  E^\top (EE^\top + \sigma_w^2 I)^{-1} E), \label{eq:loss_E}
\end{align}
where $\tr(\cdot)$ denotes the trace of a matrix. 
\end{prop}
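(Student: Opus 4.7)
The plan is to express $\loss$ as an explicit convex quadratic in the decoder matrix $D$ for a fixed encoder $E$, derive $D$ from the first-order condition, and then substitute it back to obtain the closed form for $\loss$.

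First, using the change of variables $K(\hat{x}-x) = P(\hat{h}-h)$ and $\hat{h}-h = (DE-I)h + Dw$ laid out in the preceding paragraphs, I would expand $\|P(\hat{h}-h)\|_2^2$ and take expectations. Because $h$ has mean zero and covariance $I$, $w$ has mean zero and covariance $\sigma_w^2 I$, and $h$ is independent of $w$, the cross term $\E[h w^\top]$ vanishes and a short calculation gives
\begin{align*}
\loss = \tr\bigl(P^\top P (DE-I)(DE-I)^\top\bigr) + \sigma_w^2 \tr\bigl(P^\top P DD^\top\bigr).
\end{align*}
Setting $A := P^\top P$ and $M := EE^\top + \sigma_w^2 I$ and expanding, this regroups as
\begin{align*}
\loss = \tr\bigl(A D M D^\top\bigr) - 2\tr(ADE) + \tr(A),
\end{align*}
which is convex in $D$ since $A \succeq 0$ and $M \succ 0$ (the latter thanks to $\sigma_w^2 > 0$).

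Second, I would apply standard matrix calculus to obtain the stationarity condition $\partial \loss / \partial D = 2 A D M - 2 A E^\top = 0$, i.e.\ $A D M = A E^\top$. The candidate $D^\star = E^\top M^{-1}$ satisfies this identity regardless of the rank of $A$ (and is well-defined since $M$ is invertible), and by convexity it is a global minimizer of $\loss$. Third, to recover the claimed expression for $\loss$ at $D^\star$, the key simplification is
\begin{align*}
D^\star M D^{\star\top} = E^\top M^{-1} M M^{-1} E = E^\top M^{-1} E = D^\star E,
\end{align*}
so the quadratic and linear pieces combine into $-\tr(A D^\star E)$, yielding $\loss = \tr(P^\top P) - \tr\bigl(P^\top P E^\top (EE^\top + \sigma_w^2 I)^{-1} E\bigr)$, which is precisely the claimed formula.

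The main obstacles are minor: being careful with matrix-calculus differentiation, and noting that although the first-order condition $ADM = AE^\top$ need not have a unique solution when $A$ is rank-deficient, the stated $D^\star$ always minimizes $\loss$ and realizes the claimed value (consistent with the wording ``an'' optimal decoder). A useful observation is that the derivation relies only on the zero mean and isotropic covariance of $w$, not on $w$ being Laplace, so the same optimal decoder and loss formula hold for any zero-mean additive noise with covariance $\sigma_w^2 I$.
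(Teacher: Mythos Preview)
Your proposal is correct and follows essentially the same route as the paper: expand $\loss$ in terms of $D$ via $\hat h - h = (DE-I)h + Dw$, set the gradient in $D$ to zero, and substitute back. Your version is in fact slightly more complete than the paper's, since you explicitly invoke convexity (from $A\succeq 0$, $M\succ 0$) to certify that the stationary point is a global minimizer, whereas the paper only checks that $D=E^\top(EE^\top+\sigma_w^2 I)^{-1}$ annihilates the gradient.
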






The next main step is to find an encoder $E$ that minimizes Eq. (\ref{eq:loss_E}). Since $\Delta_1 g_e = \max_{v, v' \in E(\domainH)} \|v - v'\|_1$, where $E(\domainH) = \{Eh | h \in \domainH\}$ is the image of $\domainH$ under linear transformation $E$, the design of encoder $E$ will affect $\Delta_1 g_e$ and therefore $\sigma_w^2$, and for different $\domainH$'s the effect of $E$ is also different in general. So we need to carefully consider the relationship between $E$ and $\domainH$ through geometric analysis\footnote{When using an \textbf{upper bound} of the sensitivity \citep{li2015matrix,mcmahan2022private}, such geometric analysis is not needed. We here however consider the \textbf{exact value} of $\Delta_1 g_e$.}. 

When computing $\Delta_1 g_e$ we can actually use $\domainH$'s convex hull $\domainS$ instead of $\domainH$ itself, according to the following lemma. 


\begin{lem}[Convex hull preserves $\Delta_1 g_e$]\label{lem:convex_hull}
\begin{align}
\Delta_1 g_e = \max_{v, v' \in E(\domainS)} \|v - v'\|_1.
\end{align}
\end{lem}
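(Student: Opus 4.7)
The plan is to prove the two inequalities separately. The direction $\Delta_1 g_e \le \max_{v,v' \in E(\domainS)} \|v-v'\|_1$ is immediate since $\domainH \subseteq \domainS$ and hence $E(\domainH) \subseteq E(\domainS)$, so the maximum of the same objective over a larger set can only grow. The content of the lemma is the reverse inequality, which I would obtain through a standard convex-analytic argument.

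For the nontrivial direction, I would first use linearity of $E$ to identify $E(\domainS)$ as the convex hull of $E(\domainH)$: since $\domainS = \mathrm{conv}(\domainH)$ and $E$ is linear, $E(\mathrm{conv}(\domainH)) = \mathrm{conv}(E(\domainH))$. Next, I would observe that the map $(v,v') \mapsto \|v-v'\|_1$ is convex on the convex compact set $E(\domainS) \times E(\domainS)$, because it is the composition of the convex norm $\|\cdot\|_1$ with the linear map $(v,v') \mapsto v - v'$. A convex function on a nonempty compact convex subset of a finite-dimensional space attains its maximum at an extreme point, and the extreme points of a product are products of extreme points, so the maximum is attained at a pair $(v^*, v'^*)$ where each of $v^*$ and $v'^*$ is an extreme point of $E(\domainS) = \mathrm{conv}(E(\domainH))$.

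Finally, I would invoke the fact that for a compact set $A$ in $\reals^Z$, every extreme point of $\mathrm{conv}(A)$ lies in $A$ (an extreme point cannot be written as a nontrivial convex combination of points of $A$, so it must coincide with one of them). Applied to $A = E(\domainH)$, which is compact because $\domainH$ is compact and $E$ is continuous, this gives $v^*, v'^* \in E(\domainH)$, and therefore $\max_{v,v' \in E(\domainS)}\|v-v'\|_1 \le \max_{v,v' \in E(\domainH)}\|v-v'\|_1 = \Delta_1 g_e$, closing the loop.

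The main obstacle is really bookkeeping rather than a deep step: one has to ensure that the various sets under consideration are compact (so that extreme points exist and are attained by the maximum), that $E(\mathrm{conv}(\domainH)) = \mathrm{conv}(E(\domainH))$, and that extreme points of a convex hull of a compact set belong to the set. Each of these is a standard fact, but they must be assembled in the right order; once they are in place, the convexity of $\|v-v'\|_1$ does all the work.
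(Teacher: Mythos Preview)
Your proposal is correct, but it takes a somewhat different route than the paper. Both arguments start the same way: the easy direction follows from $E(\domainH)\subseteq E(\domainS)$, and both use that $E(\domainS)=\mathrm{conv}(E(\domainH))$ (the paper proves this as a separate lemma, while you cite it as a standard fact about images of convex hulls under linear maps). For the nontrivial direction, however, the paper proceeds elementarily: it writes the maximizing pair $v_1,v_2\in E(\domainS)$ as convex combinations of points of $E(\domainH)$, expresses $v_1-v_2$ as a convex combination of pairwise differences, and applies the triangle inequality directly. Your argument instead invokes Bauer's maximum principle: the convex function $(v,v')\mapsto\|v-v'\|_1$ on the compact convex set $E(\domainS)\times E(\domainS)$ must attain its maximum at an extreme point, and extreme points of $\mathrm{conv}(E(\domainH))$ lie in $E(\domainH)$. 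Your route is more conceptual and generalizes immediately to any convex objective, while the paper's route is self-contained and avoids appealing to extreme-point theory; both are short and standard once $E(\domainS)=\mathrm{conv}(E(\domainH))$ is in hand.
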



For encoder $E$, we consider its singular value decomposition (SVD) instead: $U \Sigma V^\top$, where $U \in \reals^{Z \times Z}$ and $V \in \reals^{n \times n}$ are orthogonal matrices and $\Sigma \in \reals^{Z \times n}$ is a rectangular diagonal matrix. And the singular values are denoted by $\sigma_1, \cdots, \sigma_n$ with $|\sigma_1| \geq \cdots \geq |\sigma_n|$. Then designing $E$ is equivalent to designing matrix $U$, $V$ and $\Sigma$. The geometric interpretation of applying transform $E = U \Sigma V^\top$ to set $\domainS$ consists of three sub-transforms: 1) rotate $\domainS$ by applying rotation matrix $V^\top$; 2) scale $V^\top (\domainS)$ by applying scaling matrix $\Sigma$; 3) rotate $\Sigma V^\top (\domainS)$ by applying rotation matrix $U$. In general, the choice of any of $U$, $\Sigma$, and $V$ will affect $\Delta_1 g_e$ and hence $\sigma_w^2$.

Our overall strategy is to first minimize the loss $\loss$ and the sensitivity value $\Delta_1 g_e$ over $U$ and $V$. Under Assumption \ref{assumption:boundary_h}, the resulting $U$ and $V$ only depend on $\Sigma$ (Propositions \ref{prop:opt_V}, \ref{prop:opt_U}). We then find the optimal $\Sigma$ that minimizes the $\loss$ within a given privacy budget $\epsilon$ (Proposition \ref{prop: opt_sigma}). By the end, we relax Assumption \ref{assumption:boundary_h} to discuss the quality of our solution (Theorem \ref{thm:opt_loss_bound}). 


We start by noting that for two points within a compact set, they must lay on the boundary to have the maximum $\ell_1$ distance. We then make the following assumption in terms of $\partial \domainS$, which is the boundary of $\domainS$. It decouples the relationship between the choice of $V$ and the value of $\Delta_1 g_e$.

\begin{assumption} [Boundary $\partial \domainS$ is a centered hypersphere] \label{assumption:boundary_h}
The boundary $\partial \domainS$ is a centered hypersphere of radius $r \geq 0$, which is expressed as $\{ h \in \reals^n | \|h\|_2^2 = r^2\}$.
\end{assumption}

This is a strong assumption, but at the end of this subsection we will give a lower and upper bound of $\loss$ for any possible $\partial \domainS$ based on the results under this assumption. Since $\partial V(\domainS) = \{ h \in \reals^n | \|h\|_2^2 = r^2\} = \partial \domainS$ for any orthogonal $V$, this assumption gives us a nice property: the choice of $V$ doesn't affect $\Delta_1 g_e$ and $\sigma_w^2$.
Based on the above assumption, we can safely consider the optimal design of $V$ that minimizes $\loss$ when $\Sigma$ and $\sigma_w^2$ are given, which leads to the following proposition.

\begin{prop}[Optimal rotation matrix $V$ that minimizes $\loss$ under Assumption \ref{assumption:boundary_h}] \label{prop:opt_V}
Suppose the eigen-decomposition of the Gram matrix $P^\top P$ is expressed as $P^\top P Q = Q \Lambda$, where $\Lambda = \text{diag}(\lambda_1, \cdots, \lambda_n)\in \reals^{n\times n}$ is a diagonal matrix whose diagonal elements are eigenvalues with $\lambda_1 \geq \cdots \geq \lambda_n \geq 0$, and $Q \in \reals^{n \times n}$ is an orthogonal matrix whose columns are corresponding normalized eigenvectors. Then, when $\Sigma$ and $\sigma_w^2$ are given, $\loss$ is minimized for $V = Q$, any $Z \geq n$, and any orthogonal $U$. And the corresponding $\loss$ can be expressed as:
\begin{align}
\loss = \sum_{i=1}^n \lambda_i - \sum_{i=1}^n \lambda_i \frac{\sigma_i^2}{\sigma_i^2+\sigma_w^2}. \label{eq:loss_Sigma}
\end{align}
\end{prop}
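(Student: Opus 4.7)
The plan is to substitute the SVD $E = U\Sigma V^\top$ into the loss formula from Proposition~\ref{prop:opt_decoder} and simplify. First I would use $EE^\top + \sigma_w^2 I = U(\Sigma\Sigma^\top + \sigma_w^2 I)U^\top$ so that the orthogonal factor $U$ cancels:
\[
E^\top(EE^\top + \sigma_w^2 I)^{-1} E = V \Sigma^\top (\Sigma\Sigma^\top + \sigma_w^2 I)^{-1} \Sigma V^\top.
\]
Because $\Sigma\Sigma^\top$ is $Z\times Z$ diagonal with entries $\sigma_1^2,\ldots,\sigma_n^2,0,\ldots,0$, the middle block collapses to the $n\times n$ diagonal matrix $A = \text{diag}(\sigma_i^2/(\sigma_i^2+\sigma_w^2))$. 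Hence $\loss = \tr(P^\top P) - \tr(V^\top P^\top P V \cdot A)$. Already at this point, $U$ and the extra $Z-n$ zero rows of $\Sigma$ have vanished from $\loss$, which justifies the claim that any $U$ and any $Z \geq n$ are optimal.

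Next, I would invoke Assumption~\ref{assumption:boundary_h}: because $\partial \domainS$ is a centered hypersphere, $\domainS$ itself is invariant under any orthogonal $V^\top$, so $E(\domainS) = U\Sigma V^\top(\domainS) = U\Sigma(\domainS)$ is independent of $V$. Consequently $\Delta_1 g_e$, and therefore $\sigma_w^2$, is decoupled from $V$, so $V$ may be chosen purely to minimize $\loss$ for fixed $\Sigma$ and $\sigma_w^2$. The problem then reduces to maximizing $\tr(BA) = \sum_i B_{ii} a_i$ over orthogonal $V$, where $B = V^\top P^\top P V$ is symmetric PSD with eigenvalues $\lambda_1 \geq \cdots \geq \lambda_n$ and the weights $a_i = \sigma_i^2/(\sigma_i^2 + \sigma_w^2)$ are non-increasing in $i$, by the SVD ordering and the monotonicity of $t\mapsto t/(t+\sigma_w^2)$.

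The central step is a rearrangement/majorization argument. By the Schur--Horn theorem the diagonal of $B$ is majorized by its eigenvalue vector $(\lambda_1,\ldots,\lambda_n)$, and since the weights $(a_i)$ are non-increasing, $\sum_i B_{ii} a_i$ is maximized over all admissible diagonals exactly when $B_{ii} = \lambda_i$ for each $i$. Setting $V = Q$ makes $B = Q^\top P^\top P Q = \Lambda$ diagonal with $B_{ii} = \lambda_i$, saturating the bound and giving $\loss = \sum_i \lambda_i - \sum_i \lambda_i \sigma_i^2/(\sigma_i^2+\sigma_w^2)$ as claimed. One could equivalently appeal to von Neumann's trace inequality for the two PSD matrices $B$ and $A$.

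The main obstacle is keeping the ordering conventions aligned: the rearrangement bound is tight only because the eigenvalues of $P^\top P$ get paired with the weights $a_i$ in matching decreasing order, which relies on the SVD convention $|\sigma_1| \geq \cdots \geq |\sigma_n|$ declared just before the proposition together with the monotonicity of $t \mapsto t/(t+\sigma_w^2)$. The other ingredients --- cancellation of $U$ via orthogonality, the disappearance of the extra $Z-n$ dimensions, and the decoupling of $V$ from the sensitivity via the rotational symmetry of the hypersphere --- are essentially bookkeeping.
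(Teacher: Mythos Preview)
Your proposal is correct and follows essentially the same route as the paper: both simplify $E^\top(EE^\top+\sigma_w^2 I)^{-1}E$ to $V\,\text{diag}\bigl(\sigma_i^2/(\sigma_i^2+\sigma_w^2)\bigr)V^\top$, observe that $U$ and the extra $Z-n$ dimensions drop out, and then bound the trace by pairing the eigenvalues of $P^\top P$ with those weights in matching decreasing order. The only cosmetic difference is that the paper cites Ruhe's trace inequality (a corollary of von Neumann's) directly, whereas you reach the same bound via Schur--Horn plus rearrangement and note the von Neumann route as an alternative; these are equivalent arguments.
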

 
It is clear that choosing a $Z > n$ brings no additional benefit. Hence we can only consider \SC{$Z = n$} for simplicity. 


After the first two sub-transforms $\Sigma$ and $V^\top$, the boundary $\partial \domainS = \{ h \in \reals^n | \|h\|_2^2 = r^2\}$ becomes $\partial \Sigma V^\top (\domainS) =  \{ v \in \reals^n | \sum_{i=1}^n v_i^2 / \sigma_i^2 = r^2\}$, which is a hyperellipsoid. We then have the following proposition that gives the optimal $U$ which minimizes $\Delta_1 g_e$. 

\begin{prop}[Optimal rotation matrix $U$ that minimizes $\Delta_1 g_e$ under Assumption \ref{assumption:boundary_h}] \label{prop:opt_U}
For a given $\Sigma$, $U = I$ minimizes $\Delta_1 g_e$, and the corresponding minimum value is
\begin{align}
\Delta_1 g_e = 2r \sqrt{\sum_{i=1}^n \sigma_i^2}. \label{eq:opt_g_e}
\end{align}
\end{prop}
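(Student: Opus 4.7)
\textbf{Proof plan for Proposition \ref{prop:opt_U}.} First I would exploit that $\domainS$ is a centered ball (Assumption \ref{assumption:boundary_h}) and hence centrally symmetric, which combined with linearity of $E$ makes $E(\domainS)$ centrally symmetric as well. The identity $\max_{v,v'\in C}\|v-v'\|_1 = 2\max_{v\in C}\|v\|_1$ for centrally symmetric $C$ (upper bound by triangle inequality, lower bound by taking $v'=-v$) then gives $\Delta_1 g_e = 2\max_{v\in E(\domainS)}\|v\|_1$. Moreover, since $V^\top$ is orthogonal, $V^\top(\domainS)=\domainS$, so $E(\domainS) = U\Sigma(\domainS) = \{U\Sigma h : \|h\|_2\le r\}$ and by homogeneity
\begin{align}
\Delta_1 g_e = 2r\,\max_{\|h\|_2\le 1}\|U\Sigma h\|_1.
\end{align}
Thus $V$ truly drops out and the optimization reduces to a question about the orthogonal $U$ alone.

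Next, I would convert the $\ell_1$ maximum into an $\ell_2$ maximum via the standard dual characterization $\|x\|_1=\max_{\|s\|_\infty\le 1}s^\top x$. Swapping the order of the two maxima (both feasible sets are compact and the objective is bilinear) and using $\max_{\|h\|_2\le 1}s^\top U\Sigma h=\|\Sigma U^\top s\|_2$ yields
\begin{align}
\max_{\|h\|_2\le 1}\|U\Sigma h\|_1 \;=\; \max_{\|s\|_\infty\le 1}\sqrt{s^\top M s}, \qquad M := U\Sigma^2 U^\top .
\end{align}
The matrix $M$ is symmetric positive semidefinite with $\tr(M)=\tr(\Sigma^2)=\sum_i \sigma_i^2$, a quantity that is invariant under the choice of $U$.

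The heart of the argument is to show $\max_{\|s\|_\infty\le 1}s^\top M s\ge \tr(M)$ for every PSD $M$. Since $s\mapsto s^\top M s$ is convex, its maximum over the cube $[-1,1]^n$ is attained at a vertex $s\in\{-1,+1\}^n$. Drawing $s$ uniformly from $\{-1,+1\}^n$ gives $\mathbb{E}[s_i s_j]=\delta_{ij}$ and hence $\mathbb{E}[s^\top M s]=\tr(M)$, so some vertex must achieve at least this expectation. Combining with the previous step,
\begin{align}
\Delta_1 g_e \;\ge\; 2r\sqrt{\textstyle\sum_i \sigma_i^2}
\end{align}
for every orthogonal $U$. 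Finally, I would plug in $U=I$ to check the bound is tight: then $\|\Sigma s\|_2^2=\sum_i\sigma_i^2 s_i^2\le\sum_i\sigma_i^2$, with equality at any sign vector $s\in\{-1,+1\}^n$, matching Eq.~(\ref{eq:opt_g_e}).

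The main obstacle is the lower bound $\max_{\|s\|_\infty\le 1}s^\top M s\ge \tr(M)$: the cube-vertex maximum of a PSD quadratic form does not admit a clean closed form in general, and a naive bound like $\lambda_{\max}(M)$ would be too weak when the $\sigma_i$ are comparable. The probabilistic (random-sign) argument above sidesteps this cleanly, but one has to notice it; everything else is essentially a change of variable plus $\ell_1$/$\ell_2$ duality. A minor technicality to handle is the case where some $\sigma_i=0$ (the image $\Sigma(\domainS)$ degenerates to a lower-dimensional ellipsoid), but the derivation above goes through verbatim since $\Sigma$ enters only through $\Sigma^2$.
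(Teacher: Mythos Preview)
Your proposal is correct and takes a genuinely different route from the paper for the key lower bound. The paper proceeds geometrically: it first proves two auxiliary lemmas (the tangent-hyperplane parametrization of a hyperellipsoid, and the fact that the vertices of any circumscribed orthotope lie on the sphere of radius $\sqrt{\sum_i a_i^2}$), then argues by contradiction using a Hadamard matrix to produce $n$ mutually perpendicular tangent hyperplanes whose intersection would violate the orthotope-vertex lemma if no point had $\|v\|_1\ge\sqrt{\sum_i a_i^2}$. Because Hadamard matrices of order $n$ are only guaranteed when $n$ is a power of $2$, the paper then handles general $n$ by embedding into $\reals^{\tilde n}$ with $\tilde n$ the next power of $2$. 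Your approach replaces all of this with $\ell_1/\ell_\infty$ duality followed by the Rademacher averaging identity $\mathbb{E}_{s\in\{\pm1\}^n}[s^\top M s]=\tr(M)$, which works uniformly in $n$ and needs no auxiliary lemmas or dimension lifting. In effect your random-sign argument is a probabilistic version of the paper's Hadamard construction (both exploit orthogonality of sign vectors to isolate the trace), but yours is shorter, avoids the case split, and sidesteps any degeneracy issues for zero $\sigma_i$. The upper-bound step for $U=I$ is essentially the same in both proofs (Cauchy--Schwarz on the ellipsoid).
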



Next, we need to consider how to design the scaling matrix $\Sigma$, or equivalently, the values of $\sigma_1^2, \cdots, \sigma_n^2$, to minimize Eq. (\ref{eq:loss_Sigma}), which is the only remaining piece. Clearly, for any given $\sigma_1^2, \cdots, \sigma_n^2$ if we increase them proportionally, $\sigma_w^2$ also needs to be increased proportionally to preserve the same $\epsilon$-LDP. So without loss of generality, we impose an additional constraint $\sum_{i=1}^n \sigma_i^2 = M$, where $M$ is a positive constant. And the following proposition gives the optimal choice of $\Sigma$ that minimizes $\loss$ and preserves $\epsilon$-LDP with the Laplace mechanism.

\begin{prop}[Optimal scaling matrix $\Sigma$ that minimizes $\loss$ and preserves $\epsilon$-LDP with Laplace mechanism under Assumption \ref{assumption:boundary_h}]\label{prop: opt_sigma}
The optimal choice of $\sigma_1^2, \cdots, \sigma_n^2$ that minimize $\loss$ and preserve $\epsilon$-LDP with Laplace mechanism under constraint $\sum_{i=1}^n \sigma_i^2 = M$ is given by
\begin{align}
& \sigma_i^2 \label{eq:opt_sigma_laplace}\\
= &
\begin{dcases*}
M \cdot (\frac{\sqrt{\lambda_i}}{\sum_{i=1}^{Z'} \sqrt{\lambda_i}}(1+Z'\cdot \frac{8r^2}{\epsilon^2}) - \frac{8r^2}{\epsilon^2}), & $\forall i \leq Z'$\\
0, & \text{o.w.}
\end{dcases*} \notag
\end{align}
where $Z' \leq n$ is the largest integer such that:
\begin{align}
\frac{\sqrt{\lambda_{Z'}}}{\sum_{i=1}^{Z'} \sqrt{\lambda_i}}(1+Z'\cdot \frac{8r^2}{\epsilon^2}) - \frac{8r^2}{\epsilon^2} > 0, \label{eq:opt_T_laplace}
\end{align}
and the corresponding $\loss$ is
\begin{align} 
\loss = \frac{8r^2/\epsilon^2}{1 + Z'\cdot8r^2/\epsilon^2} (\sum_{i=1}^{Z'}  \sqrt{\lambda_i})^2 + \sum_{i=Z'+1}^n \lambda_i. \label{eq:loss_laplace}
\end{align}
\end{prop}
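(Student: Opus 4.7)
The plan is to combine Propositions~\ref{prop:opt_V} and~\ref{prop:opt_U} with the Laplace mechanism to reduce the design of $\Sigma$ to a one-dimensional water-filling problem, and then solve it explicitly via KKT. Substituting $\Delta_1 g_e = 2r\sqrt{\sum_i \sigma_i^2} = 2r\sqrt{M}$ into $\sigma_w^2 = 2(\Delta_1 g_e/\epsilon)^2$ gives $\sigma_w^2 = 8r^2 M/\epsilon^2 =: c$, which is a constant once $M$ is fixed. Writing $t_i := \sigma_i^2$, Eq.~(\ref{eq:loss_Sigma}) turns the task into
\begin{equation*}
\max_{t_1,\ldots,t_n \geq 0} \; \sum_{i=1}^n \lambda_i\,\frac{t_i}{t_i+c}, \qquad \text{s.t.} \;\; \sum_{i=1}^n t_i = M.
\end{equation*}

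Next I would verify that $t \mapsto t/(t+c)$ has second derivative $-2c/(t+c)^3 < 0$ on $[0,\infty)$, so the objective is concave and KKT is both necessary and sufficient. Introducing a multiplier $\nu$ for the budget constraint and $\mu_i \geq 0$ for each bound $t_i \geq 0$, stationarity yields either $t_i = 0$ with $\lambda_i/c \leq \nu$, or $\lambda_i c/(t_i+c)^2 = \nu$. The latter gives the water-filling form $t_i = \alpha\sqrt{\lambda_i} - c$ with $\alpha := \sqrt{c/\nu}$.

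I would then argue that the active set must be a prefix $\{1,\ldots,Z'\}$ of the sorted eigenvalues: if $t_i = 0$ and $t_j > 0$ held for some $i<j$, KKT would force $\lambda_i/c \leq \nu = \lambda_j c/(t_j+c)^2 < \lambda_j/c$, yielding $\lambda_i < \lambda_j$ and contradicting $\lambda_i \geq \lambda_j$. Plugging the prefix ansatz into $\sum_{i=1}^{Z'} t_i = M$ determines $\alpha = (M + Z' c)/\sum_{i=1}^{Z'}\sqrt{\lambda_i}$, and substituting $c = 8r^2 M/\epsilon^2$ (after cancelling the common factor of $M$) reproduces Eq.~(\ref{eq:opt_sigma_laplace}). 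The defining condition for $Z'$ becomes $\alpha \sqrt{\lambda_{Z'}} > c$, which is precisely Eq.~(\ref{eq:opt_T_laplace}); finally, substituting the optimal $t_i$ back into the objective and using $c/(M+Z'c) = (8r^2/\epsilon^2)/(1 + Z'\cdot 8r^2/\epsilon^2)$ collapses the algebra to Eq.~(\ref{eq:loss_laplace}).

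The main obstacle is the combinatorial step of pinning down $Z'$ and certifying global optimality. Concavity of the relaxed problem promotes any KKT point to a global optimum, and the prefix lemma above ensures the active set is uniquely determined by the sorted eigenvalues; the existence and uniqueness of the largest integer $Z'$ satisfying (\ref{eq:opt_T_laplace}) then follow from the standard monotonicity of the water-filling threshold as the candidate cardinality grows. The remaining arithmetic—computing $\alpha$, substituting $c$, and collapsing the closed-form loss—is mechanical.
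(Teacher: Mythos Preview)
Your proposal is correct and follows essentially the same KKT-based approach as the paper: fix $\sigma_w^2 = 8r^2M/\epsilon^2$ via Proposition~\ref{prop:opt_U}, then optimize Eq.~(\ref{eq:loss_Sigma}) over $\sigma_i^2\ge 0$ subject to $\sum_i \sigma_i^2 = M$. Your treatment is in fact slightly more careful than the paper's, which simply asserts the prefix structure of the active set by appeal to the rearrangement inequality and does not explicitly check concavity; your contradiction argument for the prefix and your second-derivative check make the global-optimality claim watertight.
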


For our task-aware approach, Proposition \ref{prop:opt_decoder}-\ref{prop: opt_sigma} complete the optimal encoder and decoder design that preserves $\epsilon$-LDP with the Laplace mechanism under Assumption \ref{assumption:boundary_h}.




\begin{figure}[t]
\centering
\input{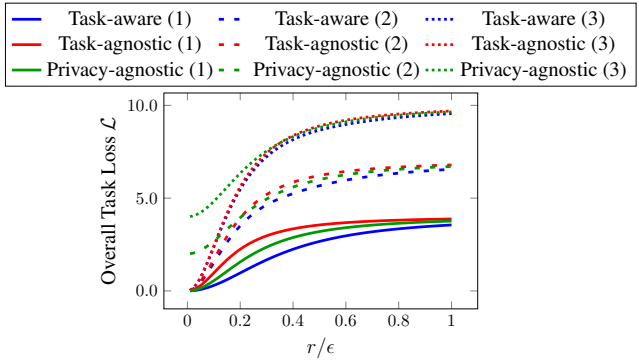}
\caption{ 
Theoretical overall task loss $\loss$ comparison when $L = I$ and Assumption \ref{assumption:boundary_h} holds. We consider three different settings which have $n=4$ and $\lambda_1 = 4$ in common. Difference between settings: 1) $\lambda_{2:4} = 0$; 2) $\lambda_{2:4} = 1$; 3) $\lambda_{2:4} = 2$. \JC{For the privacy-agnostic approach we use $Z=2$.}
}
\label{fig:theory}
\end{figure}


\phead{\SC{Validation of Task Loss from Theoretical Results (Fig. \ref{fig:theory}).}} We now compare the performance of our task-aware approach and the benchmark approaches when $L = I$ and Assumption \ref{assumption:boundary_h} holds.
\JC{The derivations of the benchmark approaches are in Section \ref{subsec:benchmarks}.}
We consider three different settings which have $n=4$ and $\lambda_1 = 4$ in common. And in setting 1, 2, and 3 we let $\lambda_2 = \lambda_3 = \lambda_4$ be 0, 1, 2 respectively. 
\JC{
For the privacy-agnostic approach\footnote{\label{footnote_privacy_agnostic_z} One can use another value of $Z$, under which the result may be slightly different but our task-aware approach will still outperform.}, we use a pre-determined $Z = 2$.
Our observations are: 1) Compared to the task-agnostic approach, our task-aware approach achieves the largest improvement in setting 1, because $\lambda_{2:4} = 0$ implies that $x_{2:4}$ are purely redundant. We can even expect higher gain than setting 1 when we have larger $n$ and zero $\lambda_{2:n}$'s, and the gain will be zero if all the $\lambda_i$'s are equal. 2) The privacy-agnostic approach completely missed the information carried by $x_{3:4}$, which explains the improvement of our task-aware approach for small $r/\epsilon$ in setting 2 and 3. We can expect higher gain than setting 3 when we have larger $n$ and larger $\lambda_{2:n}$'s, and the gain will be zero if all the missed $x_i$'s correspond to zero $\lambda_i$'s and all the other $\lambda_i$'s are equal.

} 


\phead{Transition to general boundary $\partial \domainS$.} Based on the results under Assumption \ref{assumption:boundary_h}, we can give a lower and upper bound of $\loss^*$ for general $\partial \domainS$, which is not necessarily a centered hypersphere.

\begin{thm}[Lower and upper bound of $\loss^*$ for general boundary $\partial \domainS$ when $\epsilon$-LDP is preserved with Laplace mechanism]\label{thm:opt_loss_bound}
Suppose $\partial \domainS \subset \{ h \in \reals^n: r_\text{min}^2 \leq \|h\|_2^2 \leq r_\text{max}^2\}$. Then when $\epsilon$-LDP is preserved with the Laplace mechanism, the optimal $\loss^*$ is bounded by:
\begin{align}
\loss(r_\text{min}; \lambda_{1:n}, \epsilon) \leq \loss^* \leq \loss(r_\text{max}; \lambda_{1:n}, \epsilon)
\end{align}
where $\loss(r; \lambda_{1:n}, \epsilon)$ is the value of $\loss$ determined by Eq. (\ref{eq:opt_T_laplace}) and (\ref{eq:loss_laplace}) for given radius $r$, eigenvalues $\lambda_{1:n}$ and privacy budget $\epsilon$.
\end{thm}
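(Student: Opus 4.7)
The plan is to sandwich $\loss^{*}$ between the optimal losses of two auxiliary problems whose domains are the closed balls $B(0, r_{\min})$ and $B(0, r_{\max})$. Both auxiliary problems fall exactly under Assumption \ref{assumption:boundary_h}, so their optima are given explicitly by Proposition \ref{prop: opt_sigma} as $\loss(r_{\min}; \lambda_{1:n}, \epsilon)$ and $\loss(r_{\max}; \lambda_{1:n}, \epsilon)$, respectively.

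First, I would establish the geometric chain $B(0, r_{\min}) \subseteq \domainS \subseteq B(0, r_{\max})$. The outer containment follows from convexity of $\|\cdot\|_2$: its maximum over the compact convex set $\domainS$ must be attained on $\partial\domainS$, which by hypothesis sits inside $B(0, r_{\max})$. For the inner containment, I would first note that $0 \in \domainS$, because $\mu_h = 0$ and $\domainS$ is a closed convex set containing the support of $h$. Then, for any $p$ with $\|p\|_2 \leq r_{\min}$, the ray from $0$ through $p$ exits the compact set $\domainS$ at a boundary point of norm at least $r_{\min}$; convexity of $\domainS$ puts the entire segment from $0$ up to that exit point inside $\domainS$, and in particular $p \in \domainS$.

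Second, I would rely on two elementary monotonicity observations. The $\ell_1$-sensitivity $\Delta_1 g_e$, which by Lemma \ref{lem:convex_hull} equals $\max_{v,v'\in E(\cdot)}\|v-v'\|_1$, is monotone in its domain: if $A\subseteq B$, then the sensitivity computed over $A$ is at most the one computed over $B$. Consequently, any pair $(E, \sigma_w^{2})$ providing $\epsilon$-LDP on a larger domain also provides $\epsilon$-LDP on any subdomain. Moreover, the loss achieved by $(E, \sigma_w^{2})$ paired with the optimal decoder from Proposition \ref{prop:opt_decoder} (Eq. (\ref{eq:loss_E})) depends only on $E$, $\sigma_w^{2}$, and $P$, and not on the domain at all.

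Third, the two bounds fall out from feasibility transfer. For the upper bound, take the optimal pair $(E^{\dagger},\sigma_w^{2\dagger})$ for the $B(0,r_{\max})$-problem with loss $\loss(r_{\max};\lambda_{1:n},\epsilon)$; since $\domainS\subseteq B(0,r_{\max})$ this pair remains feasible for the actual problem and achieves the same loss, giving $\loss^{*}\leq \loss(r_{\max};\lambda_{1:n},\epsilon)$. For the lower bound, take a (near-)optimal pair for the actual problem; since $B(0,r_{\min})\subseteq \domainS$, this pair is feasible for the $B(0,r_{\min})$-problem with the same loss value, yielding $\loss(r_{\min};\lambda_{1:n},\epsilon)\leq \loss^{*}$. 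The main obstacle is the inner geometric containment $B(0,r_{\min})\subseteq \domainS$, which critically uses both the convexity of $\domainS$ and the fact that $0\in\domainS$; once this is secured the remainder is a routine feasibility-monotonicity argument.
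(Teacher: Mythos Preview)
Your proposal is correct and rests on the same core mechanism as the paper's proof: the chain $B(0,r_{\min})\subseteq\domainS\subseteq B(0,r_{\max})$ forces the $\ell_1$-sensitivity to be monotone in the domain, so any $(E,\sigma_w^2)$ feasible for a larger domain is automatically feasible for a smaller one, while the objective value in Eq.~(\ref{eq:loss_E}) is domain-independent. The difference is in packaging. The paper first proves a scaling-invariance lemma $\loss^{*}(\partial\domainS;P,\epsilon)=\loss^{*}(\partial\rho(\domainS);P,\rho\epsilon)$, constructs an auxiliary distribution supported on a fixed sphere $R(2^{n-1})$, and then compares the rescaled set $\rho(\domainS)$ to that fixed sphere before translating back. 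You bypass both the scaling lemma and the auxiliary distribution by comparing $\domainS$ directly to the balls of radii $r_{\min}$ and $r_{\max}$ and invoking Proposition~\ref{prop: opt_sigma} at those radii; this is shorter and arguably cleaner. You are also more explicit about why $B(0,r_{\min})\subseteq\domainS$ (using $0=\mu_h\in\domainS$ and a ray-to-boundary argument), a containment the paper simply asserts without justification.
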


Therefore, to preserve $\epsilon$-LDP with the Laplace mechanism, our task-aware solution for general $\partial \domainS$ is: 1) First, find the smallest $r_\text{max}$ and the largest $r_\text{min}$ that bound $\partial \domainS$; 2) Then assume $\partial \domainS$ is $\{ h \in \reals^n | \|h\|_2^2 = r_\text{max}^2\}$, and choose the encoder $E$ and decoder $D$ based on Proposition \ref{prop:opt_V}-\ref{prop: opt_sigma}. We don't use the corresponding $\sigma_w^2$ however, because it may guarantee a higher LDP than needed. 3) Next, compute $\sigma_w^2$ for real $\partial \domainS$ under decoder $D$ and privacy budget $\epsilon$. 

The associated loss for the task-aware approach is at most $\loss(r_\text{max}; \lambda_{1:n}, \epsilon)$. Though in general not optimal, it differs from $\loss^*$ by at most $\loss(r_\text{max}; \lambda_{1:n}, \epsilon) - \loss(r_\text{min}; \lambda_{1:n}, \epsilon)$. 
\SC{The difference is small when $\partial \domainS$ is ``nearly'' a hypersphere, i.e., $r_\text{max} - r_\text{min} \approx 0$.} 


\subsection{General Settings} \label{subsec:general_settings}

For more complex scenarios, it is challenging to give an analytical solution to the task-aware privacy preservation problem, especially when the encoder function $g_e$, decoder function $g_d$, and task function $f$ correspond to neural networks. Thus, we present a gradient-based learning algorithm. (Benchmark algorithms are given in Section \ref{subsubsec:benchmark_algorithms}.)

Algorithm \ref{alg:ldp_codesign} is our proposed task-aware algorithm for general settings. First, the privacy budget $\epsilon$ and the latent dimension $Z$ are required inputs for the algorithm. \JC{In general, $Z$ should be proper, i.e., it is neither too small (we can find a better solution by choosing a larger $Z$) nor too big (which introduces unnecessary complexity). In practice a practitioner may need to determine a proper $Z$ on a case-by-case basis (See Section \ref{subsec:proper_z} of the Appendix for more details).}
Next, the algorithm adopts an alternating iteration approach, where in each epoch, we first update parameters $\theta_e$, $\theta_d$ by their corresponding negative gradients in line 3, and then recompute $\Delta_1 g_e$ and re-sample $w$ from $\text{Lap}^Z(0, \Delta_1 g_e / \epsilon)$ in line 4. Note that, in terms of encoder parameter $\theta_e$, instead of considering the gradient of $\loss$, we add an $\ell_2$ regularization term $\eta \|\theta_e\|^2_F$ where $\eta$ is a positive constant. Therefore, we update $\theta_e$ with the negative gradient $- (\nabla_{\theta_e} \loss + 2 \eta \theta_e)$. Without regularization, the $\|\theta_e\|^2_F$ will grow to infinity since we can always achieve a smaller $\loss$ by increasing the scale of $\phi$ proportionally. But it is not a direction we are looking for, since $\sigma_w^2$ will also increase proportionally to guarantee $\epsilon$-LDP. Moreover, the time complexity of computing $\Delta_1 g_e$ is quadratic in the number of data samples, and when necessary one can split the samples into mini-batches or use parallel processing to reduce the computational time.

\begin{algorithm}[tb]
   \caption{Task-aware Algorithm for $\epsilon$-LDP Preservation in General Settings}
   \label{alg:ldp_codesign}
   \begin{algorithmic}[1]
        \REQUIRE Privacy budget $\epsilon$ and $Z$
        \STATE Initialize encoder/decoder parameters $\theta_e, \theta_d$ and noise vector $w$
        \FOR{$\tau \in \{0, 1, \cdots, N_\text{epochs}-1\}$}
             \STATE Update $\theta_e$ and $\theta_d$ with $- (\nabla_{\theta_e} \loss + 2 \eta \theta_e)$ and $- \nabla_{\theta_d} \loss$, respectively, by one or multiple steps
             \STATE Recompute $\Delta_1 g_e$, and re-sample $w$ from $\text{Lap}^Z(0, \Delta_1 g_e / \epsilon)$
        \ENDFOR
        \STATE Return $\theta_e, \theta_d$ and $\Delta_1 g_e$
    \end{algorithmic}
\end{algorithm}


\section{Evaluation}
\label{sec:evaluation}
\begin{figure*}[t]
\begin{center}
\subfigure{
\begin{tikzpicture}[scale=0.5]
    \begin{customlegend}[
    legend entries={Task-aware (Ours),Task-agnostic,Privacy-agnostic},
    legend columns=3,
    legend style={
            /tikz/column 1/.style={
                column sep=10pt,
            },
            /tikz/column 2/.style={
                column sep=10pt,
                font=\small,
            },
            /tikz/column 3/.style={
                column sep=10pt,
            },
            /tikz/column 4/.style={
                column sep=10pt,
                font=\small,
            },
            /tikz/column 5/.style={
                column sep=10pt,
            },
            /tikz/column 6/.style={
                font=\small,
            },
        }]
    \addlegendimage{blue,mark=*,line width=1pt}
    \addlegendimage{red,mark=x,line width=1pt}
    \addlegendimage{green!60!black,mark=triangle,line width=1pt}
    \end{customlegend}
\end{tikzpicture}
}\\
\addtocounter{subfigure}{-1}
\subfigure{
\begin{tikzpicture}[scale=0.7]
    \pgfplotsset{normalsize,samples=10}
    \begin{axis}[ height=6cm, width=8cm,
                  legend pos=south east,
                  ytick distance=0.005,
                  xlabel={$1/\epsilon$},
                  ylabel={Task Loss $l(\hat{y},y)$},
                  xlabel style={font=\large},
                  ylabel style={font=\large},
                  yticklabel style={
                      /pgf/number format/fixed,
                      /pgf/number format/precision=3,
                      /pgf/number format/fixed zerofill
                  },
                  scaled y ticks=false,
                  every axis plot/.append style={ultra thick},
                  mark size=3pt ]
            \addplot [blue,mark=*, error bars/.cd, y dir=both, y explicit,
                      error bar style={line width=2pt,solid},
                      error mark options={line width=1pt,mark size=4pt,rotate=90}]
                      table [x=x, y=y, y error=y-err]{%
                        x y y-err
1.0 0.03359029442071915 0.0016416664109041234
0.6666666666666666 0.03314577415585518 0.0015999191349477329
0.3333333333333333 0.030872032046318054 0.0014432119411612369
0.2 0.027604904025793076 0.0012576542734213884
0.1 0.023482779040932655 0.00104361054690854
0.02 0.013777385465800762 0.0006082506061546095
                      };
            \addplot [red,mark=x, error bars/.cd, y dir=both, y explicit,
                      error bar style={line width=2pt,solid},
                      error mark options={line width=1pt,mark size=4pt,rotate=90}]
                      table [x=x, y=y, y error=y-err]{%
                        x y y-err
1.0 0.03377186879515648 0.0016777243036048342
0.6666666666666666 0.0337517149746418 0.0016747387856587842
0.3333333333333333 0.03364967182278633 0.0016618611938099223
0.2 0.033410731703042984 0.001636252469483137
0.1 0.032219331711530685 0.001536969626270657
0.02 0.015700094401836395 0.0006862553148718204
                      };
            \addplot [green!60!black,mark=triangle, error bars/.cd, y dir=both, y explicit,
                      error bar style={line width=2pt,solid},
                      error mark options={line width=1pt,mark size=4pt,rotate=90}]
                      table [x=x, y=y, y error=y-err]{%
                        x y y-err
1.0 0.03365899249911308 0.0016791773423342612
0.6666666666666666 0.03352882340550423 0.0016716679052983115
0.3333333333333333 0.03288992494344711 0.0016280520446458372
0.2 0.03150821104645729 0.0015369203731685958
0.1 0.02663802169263363 0.001262436446313298
0.02 0.013705738820135593 0.0006086208710435644
                      };
    \end{axis}
\end{tikzpicture}
}
\subfigure{
\begin{tikzpicture}[scale=0.7]
    \pgfplotsset{normalsize,samples=10}
    \begin{axis}[ height=6.2cm, width=12cm,
                  legend pos=south east,
                  ytick distance=0.005,
                  xlabel=Time Index,
                  ylabel=MSE of Power Consumption,
                  xlabel style={font=\large},
                  ylabel style={font=\large},
                  yticklabel style={
                      /pgf/number format/fixed,
                      /pgf/number format/precision=3,
                      /pgf/number format/fixed zerofill
                  },
                  scaled y ticks=false,
                  every axis plot/.append style={ultra thick},
                  mark size=3pt ]
            \addplot [blue,mark=*, error bars/.cd, y dir=both, y explicit,
                      error bar style={line width=2pt,solid},
                      error mark options={line width=1pt,mark size=4pt,rotate=90}]
                      table [x=x, y=y, y error=y-err]{%
                        x y y-err
                        1 0.005894421648687447 0.0013867707274264635
                        2 0.0038915476623506227 0.0009495427375163886
                        3 0.002825730864291207 0.0007192813300043887
                        4 0.001962170228203669 0.0006189975016937468
                        5 0.0017289412903691056 0.0004436723913339222
                        6 0.0019139529789716132 0.0008344324439887118
                        7 0.0038152833014867306 0.00047249197107821747
                        8 0.018533391603156736 0.0015840348669814257
                        9 0.010704746434795133 0.0014103616283930819
                        10 0.007569132740400039 0.0009585882092764631
                        11 0.007699050755655958 0.0011359260907079897
                        12 0.010694522384693655 0.0020687525760306694
                        13 0.012109283904788075 0.0022428913721724794
                        14 0.010120908189338852 0.0013483603031383044
                        15 0.011511518384902993 0.0018911173610592538
                        16 0.011394998918088095 0.002205117231149645
                        17 0.010541945307730859 0.0019494353386025815
                        18 0.013903631283516264 0.0028703793564554335
                        19 0.015950751305080467 0.0028332950307193715
                        20 0.017821214745898064 0.002373549143095332
                        21 0.02152485416250471 0.002807300784570528
                        22 0.018456541739294695 0.002356465508688527
                        23 0.012322538925251056 0.0016972735648436146
                        24 0.009561498332318735 0.001930109033171503
                      };
            \addplot [red,mark=x, error bars/.cd, y dir=both, y explicit,
                      error bar style={line width=2pt,solid},
                      error mark options={line width=1pt,mark size=4pt,rotate=90}]
                      table [x=x, y=y, y error=y-err]{%
                        x y y-err
1 0.005911824107380738 0.0014093403218896317
2 0.0038238778784254347 0.0009742428945968769
3 0.0027021003524858823 0.000712049859690766
4 0.001871624332122188 0.0005874829888239791
5 0.0015471226608624966 0.000420497705435176
6 0.0017810219090984268 0.0007721673498461316
7 0.003797819937558012 0.0004612627840078201
8 0.019148912277866498 0.001674982052661908
9 0.011742477724369762 0.001671662450837687
10 0.009176752886721984 0.001254065874371484
11 0.010302461389494958 0.0016610456075762754
12 0.0132679803004298 0.0024787001306506235
13 0.01490239620828113 0.0028130059399876935
14 0.012392711005074582 0.0017143796718857452
15 0.01399994239217836 0.0024055892567330654
16 0.014438624220217406 0.003064049890252899
17 0.012766457913374794 0.002738516017725493
18 0.016944778025777697 0.0036515521626874
19 0.021350897865894024 0.004179221560751566
20 0.02279060215238238 0.0027334113103607087
21 0.02299876034605919 0.0028163632510688577
22 0.019459748454139263 0.0023021928874244705
23 0.012993336420429658 0.001677194730179332
24 0.009517129932705872 0.0020265430033468063
                      };
            \addplot [green!60!black,mark=triangle, error bars/.cd, y dir=both, y explicit,
                      error bar style={line width=2pt,solid},
                      error mark options={line width=1pt,mark size=4pt,rotate=90}]
                      table [x=x, y=y, y error=y-err]{%
                        x y y-err
1 0.005704321317882856 0.0013716800908522243
2 0.003796146186120089 0.0009625175391877703
3 0.0026814988528268205 0.0006891853703867483
4 0.0018557279401488394 0.0005771291286210052
5 0.0015743824885842816 0.00042634199383692904
6 0.0018289024547570782 0.0007835229491274106
7 0.0036131389901269025 0.00047043282737234155
8 0.017377943847344274 0.0015800998343074321
9 0.010391602388451132 0.001491108845864086
10 0.008369595618184521 0.0012299920711222312
11 0.009431520667683078 0.0013933676227854041
12 0.012092315451875984 0.002178066285910016
13 0.014069931044915671 0.002567652251071659
14 0.012203038468096021 0.0016827184733019254
15 0.012495076654360011 0.0022688626946647697
16 0.013255167229188243 0.002904894581047726
17 0.011717047651506244 0.0022418620545696377
18 0.01660249341215128 0.003233001826656322
19 0.021794572218279393 0.004303409908299647
20 0.020856917706473268 0.0026815532528877573
21 0.022585252833666813 0.002997407581192825
22 0.01965680461053562 0.0023816265828254537
23 0.012982575033898635 0.0017157287936257987
24 0.009423276736199653 0.001974020830458703
                      };
    \end{axis}
\end{tikzpicture}
}
\vskip -0.1in
\caption{Results of Hourly Household Power Consumption. Left: Task loss $l(\hat{y}, y)$ under different LDP budgets. 
Right: MSE of power consumption for each hour, when $\epsilon = 5$. Our task-aware approach achieves a lower MSE for all the day-time hours.}
\label{fig:household}
\end{center}
\end{figure*}
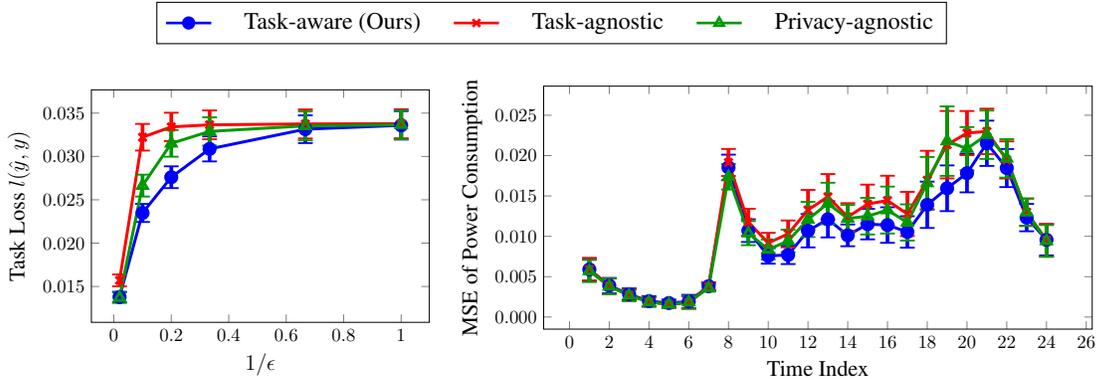

\begin{figure*}[t]
\begin{center}
\subfigure{
\begin{tikzpicture}[scale=0.5]
    \begin{customlegend}[
    legend entries={Task-aware (Ours),Task-agnostic,Privacy-agnostic},
    legend columns=3,
    legend style={
            /tikz/column 1/.style={
                column sep=10pt,
            },
            /tikz/column 2/.style={
                column sep=10pt,
                font=\small,
            },
            /tikz/column 3/.style={
                column sep=10pt,
            },
            /tikz/column 4/.style={
                column sep=10pt,
                font=\small,
            },
            /tikz/column 5/.style={
                column sep=10pt,
            },
            /tikz/column 6/.style={
                font=\small,
            },
        }]
    \addlegendimage{blue,mark=*,line width=1pt}
    \addlegendimage{red,mark=x,line width=1pt}
    \addlegendimage{green!60!black,mark=triangle,line width=1pt}
    \end{customlegend}
\end{tikzpicture}
}\\
\subfigure{
\begin{tikzpicture}[scale=0.7]
    \pgfplotsset{normalsize,samples=10}
    \begin{axis}[ height=6cm, width=8cm,
                  legend pos=south east,
                  xlabel={$1/\epsilon$},
                  ylabel={Task Loss $l(\hat{y},y)$},
                  xlabel style={font=\large},
                  ylabel style={font=\large},
                  yticklabel style={
                      /pgf/number format/fixed,
                      /pgf/number format/precision=3,
                      /pgf/number format/fixed zerofill
                  },
                  scaled y ticks=false,
                  every axis plot/.append style={ultra thick},
                  mark size=3pt ]
            \addplot [blue,mark=*, error bars/.cd, y dir=both, y explicit,
                      error bar style={line width=2pt,solid},
                      error mark options={line width=1pt,mark size=4pt,rotate=90}]
                      table [x=x, y=y, y error=y-err]{%
                        x y y-err
1.0 0.013262755237519741 0.003519141942328132
0.6666666666666666 0.01284645777195692 0.0034488172117023565
0.3333333333333333 0.012060198001563549 0.0032328994354075768
0.2 0.0092306612059474 0.002650653216528747
0.1 0.00617626029998064 0.001990939097504191
0.03333333333333333 0.005620592273771763 0.002268863047587826
                      };
            \addplot [red,mark=x, error bars/.cd, y dir=both, y explicit,
                      error bar style={line width=2pt,solid},
                      error mark options={line width=1pt,mark size=4pt,rotate=90}]
                      table [x=x, y=y, y error=y-err]{%
                        x y y-err
1.0 0.013160871341824532 0.0031317386353305063
0.6666666666666666 0.014277331531047821 0.003774169828909503
0.3333333333333333 0.012874362990260124 0.003062049659157563
0.2 0.011724209412932396 0.002731474655463528
0.1 0.008361659944057465 0.0022073920550629476
0.03333333333333333 0.005944613367319107 0.001965234239173494
                      };
            \addplot [green!60!black,mark=triangle, error bars/.cd, y dir=both, y explicit,
                      error bar style={line width=2pt,solid},
                      error mark options={line width=1pt,mark size=4pt,rotate=90}]
                      table [x=x, y=y, y error=y-err]{%
                        x y y-err
1.0 0.013239380903542042 0.003433314101951382
0.6666666666666666 0.012996834702789783 0.003168602171401775
0.3333333333333333 0.012027963064610958 0.0032120046134894603
0.2 0.01172233372926712 0.0030950884463188757
0.1 0.007192735560238361 0.00234309067743752
0.03333333333333333 0.006240852642804384 0.002433868037698607
                      };
    \end{axis}
\end{tikzpicture}
}
\subfigure{
\begin{tikzpicture}[scale=0.7]
    \pgfplotsset{normalsize,samples=10}
    \begin{axis}[ height=6cm, width=8cm,
                  legend pos=south east,
                  xlabel={$1/\epsilon$},
                  ylabel={Task Loss $l(\hat{y},y)$},
                  xlabel style={font=\large},
                  ylabel style={font=\large},
                  xticklabel style={
                      /pgf/number format/fixed,
                      /pgf/number format/precision=2,
                      /pgf/number format/fixed zerofill
                  },
                  yticklabel style={
                      /pgf/number format/fixed,
                      /pgf/number format/precision=2,
                      /pgf/number format/fixed zerofill
                  },
                  scaled x ticks=false,
                  every axis plot/.append style={ultra thick},
                  mark size=3pt ]
              \addplot [blue,mark=*, error bars/.cd, y dir=both, y explicit,
                      error bar style={line width=2pt,solid},
                      error mark options={line width=1pt,mark size=4pt,rotate=90}]
                      table [x=x, y=y, y error=y-err]{%
                        x y y-err
0.2 0.39601439237594604 0.0879301554709341
0.14285714285714285 0.4258347749710083 0.11548370977354301
0.1 0.30770692229270935 0.07992588438232681
0.05 0.18568047881126404 0.08279493600667165
0.02857142857142857 0.12989148497581482 0.05044583547939662
0.02 0.14664454758167267 0.08304224756566464
0.01 0.07061871141195297 0.05173729291334856
                      };
            \addplot [red,mark=x, error bars/.cd, y dir=both, y explicit,
                      error bar style={line width=2pt,solid},
                      error mark options={line width=1pt,mark size=4pt,rotate=90}]
                      table [x=x, y=y, y error=y-err]{%
                        x y y-err
0.2 0.6370036005973816 0.02906890895718559
0.14285714285714285 0.6325657367706299 0.03640948511491095
0.1 0.6098132133483887 0.05548023539849003
0.05 0.4583469033241272 0.0648784075561695
0.02857142857142857 0.42752236127853394 0.10089374043629337
0.02 0.32562193274497986 0.0871407577773762
0.01 0.23550280928611755 0.09318238139350787
                      };
            \addplot [green!60!black,mark=triangle, error bars/.cd, y dir=both, y explicit,
                      error bar style={line width=2pt,solid},
                      error mark options={line width=1pt,mark size=4pt,rotate=90}]
                      table [x=x, y=y, y error=y-err]{%
                        x y y-err
0.2 0.5238373279571533 0.07721660541871976
0.14285714285714285 0.42509210109710693 0.08052946099183016
0.1 0.41173556447029114 0.08900835324849833
0.05 0.32173073291778564 0.10464256293685366
0.02857142857142857 0.15205028653144836 0.050540424097744646
0.02 0.12942908704280853 0.05555883955258167
0.01 0.2243979573249817 0.05458110141105603
                      };
    \end{axis}
\end{tikzpicture}
}
\vskip -0.1in
\caption{Task loss $l(\hat{y}, y)$ under different LDP budgets for real estate valuation (left) and breast cancer detection (right). 
}
\label{fig:ml}
\end{center}
\end{figure*}
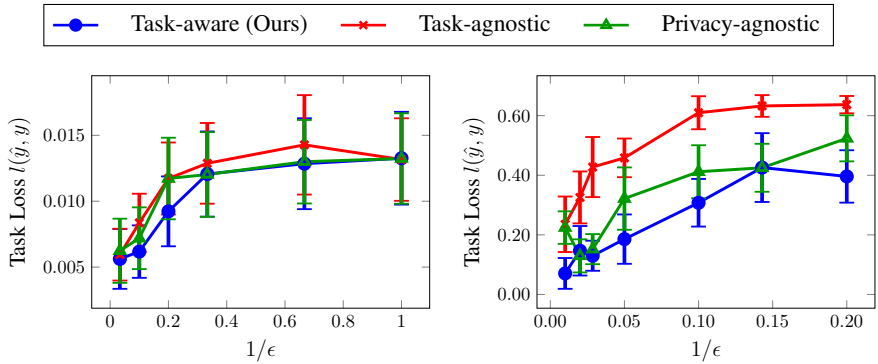

Our evaluation compares the performance of the proposed task-aware approach and the benchmark approaches (as defined in Section \ref{sec:problem_formulation}). Three applications and corresponding datasets from the standard UCI Machine Learning Repository \citep{uci} are considered: mean estimation of hourly household power consumption, real estate valuation, and breast cancer detection. Configuration and training details are provided in the appendix due to space limitations. Moreover, to show the generality of our task-aware approach with respect to \textbf{high-dimensional} image datasets, we provide strong experimental results for MNIST dataset \cite{lecun1998gradient} as well, given in Section \ref{subsec:high_dim}. Our code is publicly available at \url{https://github.com/chengjiangnan/task_aware_privacy}.

\subsection{Mean Estimation of Hourly Household Power Consumption}
We first consider a mean estimation problem, based on  measurements of individual household electric power consumption over four years \citep{hebrail2012individual}. Each data sample $x \in \reals^{24}$ is a time-series that contains the hourly household power consumption for one single day, and our objective is to estimate the mean of the hourly household power consumption for $N$ days. As discussed in Section \ref{subsec:linear_model}, we can define the overall task loss in the following way: 
\begin{align}
\loss = \E_{x \sim \dist_x}[\|K(\hat{x} - x)\|^2_2] = \sum_{i=1}^{24} k_i^2 \E_{x \sim \dist_x}[(\hat{x}_i - x_i)^2]\notag
\end{align}
where $K = \text{diag}(k_1, k_2, \cdots, k_{24})$ factors the importance of the mean estimation for each hour. In our experiment we set $k_i = 2$ for $i \in \{9, 10, \cdots, 20\}$ (i.e., day-time hours) and $k_i = 1$ for other $i$'s (i.e., night-time hours). And we adopt a linear encoder and decoder model. As the considered problem is based on a linear model with MSE task loss, we adopt the solutions developed in Section \ref{subsec:linear_model} for the three approaches (we choose $Z = 3$ for the privacy-agnostic approach).


Fig. \ref{fig:household} shows our experimental results. First, on the left, we compare the task loss $l(\hat{y}, y)$ for the three approaches under different LDP budgets. For each approach, the overall task loss $\loss$ decreases when a larger LDP budget $\epsilon$ is given. Besides, for a given LDP budget, our task-aware approach always outperforms the benchmark approaches on overall task loss $\loss$, 
\JC{and the maximum improvements against the task-agnostic and privacy-agnostic approach are $22.9\%$ ($\epsilon = 10$) and $11.7\%$ ($\epsilon = 5$), respectively.}
Second, on the right, we select $\epsilon = 5$ and compare the MSE of power consumption for each hour. We see that our task-aware approach achieves a lower MSE for all the day-time hours, and a similar MSE for the night-time hours. This observation can be explained by three reasons: 1) We select a higher $k_i$ for the day-time hours, so our task-aware approach gives higher priority to minimizing the loss for those dimensions in $x$; 2) Although $x$ has 24 dimensions, the variance in each dimension can be mostly explained by several common latent dimensions, so our task-aware approach still achieves a similar MSE for the night-time hours; \JC{3) Our task-aware approach is able to adopt different scales to different latent dimensions according to their task relevance while the privacy-agnostic approach cannot.} 

\subsection{Real Estate Valuation and Breast Cancer Detection}

Next, we consider a real estate valuation problem and a breast cancer detection problem. Both the problems are not based on the linear model with MSE task loss, so we use Algorithm \ref{alg:ldp_codesign} developed in Section \ref{subsec:general_settings} to solve them \JC{(we use $Z = 3$ for both our task-aware approach and the privacy-agnostic approach for fair comparison; and the performance of the task-aware approach under different $Z$'s can be found in Section \ref{subsec:proper_z} of the Appendix)}.

\phead{Real Estate Valuation.} For this problem, we use historical real estate valuation data collected from Taiwan \citep{yeh2018building}, which contains 400+ instances. Here, $x \in \reals^6$ contains 6 attributes that are highly related to the value of a house, including transaction date, house age, geographic coordinates, etc. And $y \in \reals$ represents the valuation of a house. We first train a one-hidden-layer feedforward neural network regression model using the ground truth $x$ and $y$, to serve as our task function $f$. Then, we minimize the $\ell_2$ loss of $\hat{y}$ and $y$, based on a linear encoder and decoder model.

\phead{Breast Cancer Detection.} For this problem, we use a well-known breast cancer diagnostic dataset \citep{street1993nuclear} from Wisconsin, which contains 500+ instances. Here, $x \in \reals^{30}$ contains 30 attributes that measure 10 features of a cell nucleus. And $y$ is a binary variable that represents a diagnosis result (malignant or benign). We first train a one-hidden-layer feedforward neural network classification model using the ground truth $x$ and $y$, to serve as our task function $f$. Then we aim to minimize the cross-entropy loss of $\hat{y}$ and $y$, with encoder and decoder both being one-hidden-layer feedforward neural networks.

Fig. \ref{fig:ml} shows the evaluation results. For both problems, we can see our task-aware approach nearly always outperforms the benchmark approaches on overall task loss $\loss$ under different LDP budgets, which demonstrates the effectiveness of our proposed solution. \JC{The maximum improvements against the task-agnostic and privacy-agnostic approach are $26.1\%$ ($\epsilon = 10$) and $21.2\%$ ($\epsilon = 5$) for real estate valuation, and are $70.0\%$ ($\epsilon = 100$) and $68.5\%$ ($\epsilon = 100$) for breast cancer detection.} 


\phead{Limitations.} Further research is required to learn anonymized representations that are useful for \textit{multiple} tasks. Moreover, our theoretical guarantee does not apply to general deep neural networks.

\section{Conclusion and Future Work}
\label{sec:conclusion}
This paper provides a principled task-aware privacy preservation method to improve the privacy-utility trade-off for ML tasks that increasingly operate on rich, multi-dimensional user data. 
We gave an analytical near-optimal solution for a general linear encoder-decoder model and MSE task loss, and developed a gradient-based learning algorithm for more general nonlinear settings. Our evaluation showed that our task-aware approach outperforms the benchmark approaches on overall task loss under various LDP budgets.


There are several directions along which we can extend this work. First, it is worthwhile to extend our analysis of the task-aware privacy preservation problem to other LDP mechanisms as well, including mechanisms for approximate LDP. Second, another direction is to consider task-aware privacy preservation for different groups of users in a distributed setting. Lastly, we also plan to find task-aware anonymized representations for multi-task learning.

\section*{Funding Disclosure}
\label{sec:funding}
This material is based upon work supported by the National Science Foundation under Grant No. 2133481. Any opinions, findings, and conclusions or recommendations expressed in this material are those of the authors and do not necessarily reflect the views of the National Science Foundation.

\bibliography{ref}
\bibliographystyle{icml2022}

\newpage
\appendix
\section{Appendix}
\subsection{Proof of Proposition \ref{prop:opt_decoder}}

\begin{proof}
We have
\begin{align}
\loss
= & \E[\|K(\hat{x} - x)\|^2_2] = \E[\|P(\hat{h} - h)\|^2_2]\\
= & \E[\|P((DE-I)h + Dw)\|^2_2]\\
= & \E[\tr(P(DE-I)hh^\top (DE-I)^\top P^\top) \notag\\
& + \tr(P D ww^\top D^\top P^\top)]\\
= & \tr(P(DE-I) (DE-I)^\top P^\top)\notag\\
& + \tr(P D \Sigma_{ww} D^\top P^\top). \label{eq:loss_DE}
\end{align}
And we can verify that $ D = E^\top (E E^\top + \sigma_w^2 I)^{-1}$ is a zero point of
\begin{align}
\frac{\nabla \loss}{\nabla D} = 2 P^\top P ((DE-I) E^\top + D \cdot \sigma_w^2 I).
\end{align}
Then we plug the expression of $D$ into Eq. (\ref{eq:loss_DE}) and get Eq. (\ref{eq:loss_E}).
\end{proof}


\subsection{Proof of Lemma \ref{lem:convex_hull}}

In this subsection we give the proof of Lemma \ref{lem:convex_hull}. Here we treat $E$ as a mapping instead of a matrix, and its inverse mapping, which is also linear, is denoted by $E^{-1}$.

We first prove the following lemma.

\begin{lem}[Convex hull after linear transformation]\label{lem:convex_hull_transform}
$E(\domainS)$ is the convex hull of $E(\domainH)$.
\end{lem}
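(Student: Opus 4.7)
The plan is to establish set equality $E(\domainS) = \text{conv}(E(\domainH))$ by the standard two-inclusion argument, exploiting linearity of $E$ together with the fact (from the paper's setup) that $\domainS$ is the convex hull of $\domainH$.

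First I would show $\text{conv}(E(\domainH)) \subseteq E(\domainS)$. Since $\domainH \subseteq \domainS$, we immediately get $E(\domainH) \subseteq E(\domainS)$. It then suffices to verify that $E(\domainS)$ is itself convex, because the convex hull is the smallest convex set containing $E(\domainH)$. For any $v_1, v_2 \in E(\domainS)$, write $v_i = E(s_i)$ with $s_i \in \domainS$; for $\alpha \in [0,1]$, linearity of $E$ gives $\alpha v_1 + (1-\alpha) v_2 = E(\alpha s_1 + (1-\alpha) s_2)$, and the inner point lies in $\domainS$ by convexity of $\domainS$. Hence $E(\domainS)$ is convex.

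Second, for the reverse inclusion $E(\domainS) \subseteq \text{conv}(E(\domainH))$, I would take any $v \in E(\domainS)$ and write $v = E(s)$ for some $s \in \domainS$. Because $\domainS$ is the convex hull of $\domainH$, there exist points $h_1, \dots, h_k \in \domainH$ and coefficients $\alpha_i \geq 0$ with $\sum_i \alpha_i = 1$ (by Carathéodory one can even take $k \leq n+1$, but that refinement is not needed) such that $s = \sum_i \alpha_i h_i$. Applying $E$ and using linearity yields $v = \sum_i \alpha_i E(h_i)$, a convex combination of elements of $E(\domainH)$, so $v \in \text{conv}(E(\domainH))$.

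Combining both inclusions gives the lemma. I do not anticipate a genuine obstacle: the entire argument is the routine fact that images of convex hulls under affine (in particular linear) maps are convex hulls of images, and the only ingredient beyond definitions is linearity of $E$, which is assumed throughout Section \ref{subsec:linear_model}. The only minor care point is making sure we use that $\domainS$ is \emph{defined} to be the convex hull of $\domainH$ (stated just before Lemma \ref{lem:convex_hull}), so that Carathéodory-style representations of elements of $\domainS$ are available.
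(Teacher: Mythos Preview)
Your proof is correct. Both your argument and the paper's proceed by a two-inclusion argument and handle the first inclusion identically: $E(\domainH)\subseteq E(\domainS)$ together with convexity of $E(\domainS)$ (shown via linearity of $E$ and convexity of $\domainS$) gives $\text{conv}(E(\domainH))\subseteq E(\domainS)$.

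The difference is in the reverse inclusion. You take any $s\in\domainS$, write it as a finite convex combination of points in $\domainH$, and push that combination through $E$ to land in $\text{conv}(E(\domainH))$; this is the direct and standard route. The paper instead argues by contradiction: it supposes there is a convex $\domainB$ with $E(\domainH)\subseteq\domainB\subsetneq E(\domainS)$, forms $E^{-1}(\domainB)\cap\domainS$, checks this is convex and contains $\domainH$ yet is a proper subset of $\domainS$, contradicting that $\domainS$ is the convex hull of $\domainH$. Your direct approach is shorter and avoids reasoning about preimages; the paper's contradiction route works just as well but is a bit more circuitous.
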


\begin{proof}
Since $\domainH \subseteq \domainS$, we have $E(\domainH) \subseteq E(\domainS)$. And for any $v, v' \in E(\domainS)$ and $\delta \in [0, 1]$, we have $\delta h + (1-\delta) h' \in \domainS$, where $h \in E^{-1}(v) \cap \domainS$ and $h' \in E^{-1}(v') \cap \domainS$. Thus $\delta v + (1-\delta) v' = E(\delta h + (1-\delta) h') \in E(\domainS)$. Therefore, $E(\domainS)$ is a convex set containing $E(\domainH)$.

If $E(\domainS)$ is not the convex hull, then we can find a convex set $\domainB$ such that $ E(\domainH) \subseteq \domainB \subset E(\domainS)$. Then we have $\domainH \subseteq E^{-1}(\domainB) \cap \domainS \subset \domainS$ (If $E^{-1}(\domainB) \cap \domainS = \domainS$, we will have $\domainB = E(E^{-1}(\domainB) \cap \domainS)= E(\domainS)$, which is not true). Besides, for any $h, h' \in E^{-1}(\domainB) \cap \domainS$ and $\delta \in [0, 1]$, we have $\delta v + (1-\delta) v' \in \domainB$, where $v = E(h)$ and $v' = E(h')$. Thus $\delta h + (1-\delta) h' \in E^{-1}(\delta v + (1-\delta) v') \cap \domainS \in E^{-1}(\domainB) \cap \domainS$. Therefore, $E^{-1}(\domainB) \cap \domainS$ is a convex set containing $\domainH$. This is contradictory to the fact that $\domainS$ is the convex hull of $\domainH$. So $E(\domainS)$ must be the convex hull of $E(\domainH)$.
\end{proof}

Now we can proceed to the proof of Lemma \ref{lem:convex_hull}.

\begin{proof}

Notice that $\Delta_1 g_e = \max_{v, v' \in E(\domainH)} \|v - v'\|_1$, so our target is to prove 
\begin{align}
\max_{v, v' \in E(\domainS)} \|v - v'\|_1 = \max_{v, v' \in E(\domainH)} \|v - v'\|_1.
\end{align}

First, since $E(\domainH) \subseteq E(\domainS)$, we have $\max_{v, v' \in E(\domainS)} \|v - v'\|_1 \geq \max_{v, v' \in E(\domainH)} \|v - v'\|_1$.

Next, suppose $v_1, v_2$ are the two points in $E(\domainS)$ such that the $\ell_1$ distance between them is the largest. Since $E(\domainS)$ is the convex hull of $E(\domainH)$, we can express $v_i$ as ($\forall i \in \{1, 2\}$):
\begin{align}
& v_i = \sum_{j=1}^{p(i)} t_{i,j} \tilde{v}_{i,j},\\ 
\text{s.t.} \quad & \tilde{v}_{i,j} \in E(\domainH), \quad \forall j \in \{1, 2, \cdots, p(i)\}, \\
& \sum_{j=1}^{p(i)} t_{i,j} = 1,\\
& t_{i,j} \geq 0, \quad \forall j \in \{1, 2, \cdots, p(i)\},
\end{align}
where $p(i)$ is an positive integer, $\forall i \in \{1, 2\}$. 

For convenience, we let $A = \max_{v, v' \in \{\tilde{v}_{i,j}|\forall i,j \}} \|v - v'\|_1$. Clearly, $A \leq \max_{v, v' \in E(\domainH)} \|v - v'\|_1$.

Notice that $v_1 - v_2$ can be expressed as the linear combination of $\tilde{v}_{1,j} - \tilde{v}_{2,q}$, $\forall j \in \{1, 2, \cdots, p(1)\}, q \in \{1, 2, \cdots, p(2)\}$. That is, $\exists \gamma_{j,q} \geq 0$ such that:
\begin{align}
& v_1 - v_2 = \sum_{j=1}^{p(1)} \sum_{q=1}^{p(2)} \gamma_{j,q} (\tilde{v}_{1,j} - \tilde{v}_{2,q}), \\
& \sum_{j=1}^{p(1)} \sum_{q=1}^{p(2)} \gamma_{j,q} = 1.
\end{align}

Then we have
\begin{align}
& \|v_1 - v_2\|_1 \\ 
= & \| \sum_{j=1}^{p(1)} \sum_{q=1}^{p(2)} \gamma_{j,q} (\tilde{v}_{1,j} - \tilde{v}_{2,q}) \|_1\\
\leq & \sum_{j=1}^{p(1)} \sum_{q=1}^{p(2)} \| \gamma_{j,q} (\tilde{v}_{1,j} - \tilde{v}_{2,q}) \|_1\\
= & \sum_{j=1}^{p(1)} \sum_{q=1}^{p(2)} \gamma_{j,q} \| (\tilde{v}_{1,j} - \tilde{v}_{2,q}) \|_1\\
 \leq & \sum_{j=1}^{p(1)} \sum_{q=1}^{p(2)} \gamma_{j,q} A = A \leq \max_{v, v' \in E(\domainH)} \|v - v'\|_1.
\end{align} 

So we also have $\max_{v, v' \in E(\domainS)} \|v - v'\|_1 \leq \max_{v, v' \in E(\domainH)} \|v - v'\|_1$.

Thus $\Delta_1 g_e = \max_{v, v' \in E(\domainH)} \|v - v'\|_1 = \max_{v, v' \in E(\domainS)} \|v - v'\|_1$.
\end{proof}


\subsection{Proof of Proposition \ref{prop:opt_V}}

\begin{proof}
We have $\tr(P^\top P) = \sum_{i=1}^n \lambda_i$, which is a fixed number. Therefore we focus on maximizing the second trace term in Eq. (\ref{eq:loss_E}). For any $Z \geq n$ we have
\begin{align}
& EE^\top + \sigma_w^2 I \\
= & U \Sigma \Sigma^\top  U^\top + \sigma_w^2 I \\
= & U \text{diag}(\sigma_1^2+\sigma_w^2, \cdots, \sigma_n^2 + \sigma_w^2, \underbrace{\sigma_w^2, \cdots, \sigma_w^2}_{Z-n\text{ in total}}) U^\top.
\end{align}
Thus we have for any orthogonal $U$:
\begin{align}
& E^\top (EE^\top + \sigma_w^2 I)^{-1} E\\ 
= & V \Sigma^\top U^\top \cdot \notag\\
 & U \text{diag}(\frac{1}{\sigma_1^2+\sigma_w^2}, \cdots, \frac{1}{\sigma_n^2+\sigma_w^2}, \frac{1}{\sigma_w^2}, \cdots, \frac{1}{\sigma_w^2}) U^\top \cdot \notag\\
 &U \Sigma V^\top \\
= & V \Sigma^\top \text{diag}(\frac{1}{\sigma_1^2+\sigma_w^2}, \cdots, \frac{1}{\sigma_n^2+\sigma_w^2}, \frac{1}{\sigma_w^2}, \cdots, \frac{1}{\sigma_w^2}) \Sigma V^\top\\
= & V \text{diag}(\frac{\sigma_1^2}{\sigma_1^2+\sigma_w^2}, \cdots, \frac{\sigma_n^2}{\sigma_n^2+\sigma_w^2}) V^\top.
\end{align}
So $E^\top (EE^\top + \sigma_w^2 I)^{-1} E$ is a positive semi-definite matrix with eigen-values $\frac{\sigma_1^2}{\sigma_1^2+\sigma_w^2} \geq \cdots \geq \frac{\sigma_n^2}{\sigma_n^2+\sigma_w^2} \geq 0$. Then by Ruhe's trace inequality \citep{ruhe1970perturbation} (a corollary of Von Neumann’s trace inequality \citep{von1937some}):
\begin{align}
& \tr(P^\top P  E^\top (EE^\top + \sigma_w^2 I)^{-1} E)\\
\leq & \sum_{i=1}^n \lambda_i \frac{\sigma_i^2}{\sigma_i^2+\sigma_w^2},
\end{align}
and when $V = Q$ the equality holds. So we have Eq. (\ref{eq:loss_Sigma}) for $V = Q$, any $Z \geq n$ and any orthogonal $U$.
\end{proof}




\subsection{Proof of Proposition \ref{prop:opt_U}}

In this subsection we give the proof of Proposition \ref{prop:opt_U}. For convenience we let $a_i = r|\sigma_i|, \forall i \in \{1, 2, \cdots, n\}$, and then the hyperellipsoid $\{ v \in \reals^n | \sum_{i=1}^n v_i^2 / \sigma_i^2 = r^2\}$ can be written as $\{ v \in \reals^n | \sum_{i=1}^n v_i^2 / a_i^2 = 1\}$, which is the standard expression. And we also have $a_1 \geq a_2 \geq \cdots \geq a_n$.

Before proving Proposition \ref{prop:opt_U}, we first give two lemmas related to the properties of a hyperellipsoid.

\begin{lem}[Tangent hyperplane of hyperellipsoid]\label{lem:tangent_hyperplane}
Any tangent hyperplane of hyperellipsoid $\{ v \in \reals^n | \sum_{i=1}^n v_i^2 / a_i^2 = 1\}$ can be expressed as:
\begin{align}
\sum_{j=1}^n u_j v_j = \sqrt{\sum_{j=1}^n a_j^2 u_j^2}, \label{eq:tangent_hyperplane}
\end{align}
where $u_1, u_2, \cdots, u_n$ are the coefficients.
\end{lem}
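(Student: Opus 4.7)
My plan is to derive the tangent hyperplane at an arbitrary point of the hyperellipsoid via the gradient, and then show that the resulting linear equation can be rewritten exactly in the claimed form by a suitable normalization of the coefficients. Specifically, fix a tangent point $v^0 = (v^0_1, \ldots, v^0_n)$ satisfying $\sum_{i=1}^n (v^0_i)^2 / a_i^2 = 1$, and consider $F(v) = \sum_{i=1}^n v_i^2/a_i^2 - 1$. The outward normal is $\nabla F(v^0) = (2 v^0_1/a_1^2, \ldots, 2 v^0_n/a_n^2)$, so the tangent hyperplane at $v^0$ is
\begin{align}
\sum_{i=1}^n \frac{v^0_i}{a_i^2}\, v_i \;=\; \sum_{i=1}^n \frac{(v^0_i)^2}{a_i^2} \;=\; 1.
\end{align}

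Next I would show this standard form matches Eq.~(\ref{eq:tangent_hyperplane}) up to a positive scaling. For any $c > 0$, set $u_j = c \cdot v^0_j / a_j^2$. The left-hand side of the tangent equation becomes $\sum_j u_j v_j = c$, while the right-hand side of Eq.~(\ref{eq:tangent_hyperplane}) becomes $\sqrt{\sum_j a_j^2 u_j^2} = c \sqrt{\sum_j (v^0_j)^2/a_j^2} = c$, using the fact that $v^0$ lies on the hyperellipsoid. Hence each tangent hyperplane of the hyperellipsoid admits a representation of the form in Eq.~(\ref{eq:tangent_hyperplane}).

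For the converse direction, I would check that for any nonzero coefficient vector $(u_1, \ldots, u_n)$, the hyperplane defined by Eq.~(\ref{eq:tangent_hyperplane}) is indeed tangent. Define $v^0_j = a_j^2 u_j / \sqrt{\sum_k a_k^2 u_k^2}$; then $\sum_j (v^0_j)^2/a_j^2 = 1$ so $v^0$ is on the hyperellipsoid, and direct substitution gives $\sum_j u_j v^0_j = \sqrt{\sum_j a_j^2 u_j^2}$, so $v^0$ lies on the proposed hyperplane. Moreover the hyperplane's normal $(u_1, \ldots, u_n)$ is parallel to $\nabla F(v^0)$ by construction, confirming tangency at $v^0$. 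This two-way matching completes the characterization.

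The routine steps are standard calculus on a smooth level set; the only mild subtlety I anticipate is the sign/normalization bookkeeping, namely that the representation in Eq.~(\ref{eq:tangent_hyperplane}) forces the right-hand side to be nonnegative, which is why only positive scalings $c > 0$ yield valid $u_j$'s from a given tangent point. This is not an obstacle but simply a convention: since $(u_j)$ and $(-u_j)$ describe the same hyperplane with opposite orientations, restricting to $c > 0$ loses no generality.
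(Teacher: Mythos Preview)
Your proposal is correct and follows essentially the same approach as the paper: both obtain the standard tangent-plane equation $\sum_j (\tilde v_j/a_j^2)\,v_j = 1$ at a point on the ellipsoid and then identify $u_j$ with (a positive scalar multiple of) $\tilde v_j/a_j^2$, the paper going from $u$ to the tangency point $\tilde v_j = a_j^2 u_j/\sqrt{\sum_q a_q^2 u_q^2}$ and you doing both directions. Your treatment is slightly more careful in deriving the tangent plane via the gradient and in handling the sign/normalization convention, but there is no substantive difference.
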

\begin{proof}
It can be easily verified that point $\tilde{v} \in \reals^n$ such that
\begin{align}
\tilde{v}_j = \frac{a_j^2 u_j}{\sqrt{\sum_{q=1}^n a_q^2 u_q^2}}, \quad j \in \{1, 2, \cdots, n\}, \label{eq:point_of_tangency}
\end{align}
is located on the hyperellipsoid. And by adjusting the values of $u_1, u_2, \cdots, u_n$ we can express any point on the hyperellipsoid with Eq. (\ref{eq:point_of_tangency}). Moreover, the tangent hyperplane for point of tangency $\tilde{v}$ can be expressed as
\begin{align}
\sum_{j=1}^n \frac{\tilde{v}_j v_j}{a_j^2} = 1. \label{eq:tangent_hyperplane_2} 
\end{align}
Plugging Eq. (\ref{eq:point_of_tangency}) into Eq. (\ref{eq:tangent_hyperplane_2}) we get Eq. (\ref{eq:tangent_hyperplane}).
\end{proof}

\begin{lem}[Locus of the vertices of circumscribed orthotope]\label{lem:locus_vertices}
The vertices of any orthotope that circumscribes hyperellipsoid $\{ v \in \reals^n | \sum_{i=1}^n v_i^2 / a_i^2 = 1\}$ is on hypersphere $\{ v \in \reals^n | \sum_{i=1}^n v_i^2  = \sum_{i=1}^n a_i^2\}$.
\end{lem}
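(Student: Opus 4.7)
The plan is to parameterize a circumscribing orthotope by an orthonormal basis, write each pair of supporting hyperplanes in the explicit form provided by Lemma~\ref{lem:tangent_hyperplane}, express the vertices as the intersection of one hyperplane from each pair, and then compute the squared norm of each vertex using orthonormality.

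First, I would observe that any orthotope in $\reals^n$ is determined by $n$ mutually orthogonal directions, which I collect as the rows $u^{(1)},\dots,u^{(n)}$ of an orthogonal matrix $U$ (so $\sum_{k=1}^n u^{(k)}_j u^{(k)}_{j'} = \delta_{jj'}$). The orthotope circumscribes the hyperellipsoid exactly when, for each $k$, the two faces perpendicular to $u^{(k)}$ are tangent hyperplanes of the hyperellipsoid. By Lemma~\ref{lem:tangent_hyperplane}, such a tangent hyperplane with normal direction $u^{(k)}$ must be
\begin{align}
\sum_{j=1}^n u^{(k)}_j v_j \;=\; \pm\, d_k, \qquad d_k \;:=\; \sqrt{\sum_{j=1}^n a_j^2 \bigl(u^{(k)}_j\bigr)^2},
\end{align}
where the $\pm$ accounts for the two opposite faces.

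Next, a vertex of the orthotope is the unique point lying in one face from each of the $n$ pairs, so it is the solution to the linear system $\sum_{j=1}^n u^{(k)}_j v_j = \epsilon_k d_k$ for $k=1,\dots,n$, where $\epsilon_k \in \{-1,+1\}$. In matrix form, $U v = (\epsilon_1 d_1, \dots, \epsilon_n d_n)^\top$, so $v = U^\top (\epsilon_1 d_1, \dots, \epsilon_n d_n)^\top$. Since $U$ is orthogonal, $\|v\|_2^2 = \sum_{k=1}^n d_k^2$.

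Finally, I would substitute the definition of $d_k$ and swap the order of summation:
\begin{align}
\sum_{k=1}^n d_k^2 \;=\; \sum_{k=1}^n \sum_{j=1}^n a_j^2 \bigl(u^{(k)}_j\bigr)^2 \;=\; \sum_{j=1}^n a_j^2 \sum_{k=1}^n \bigl(u^{(k)}_j\bigr)^2 \;=\; \sum_{j=1}^n a_j^2,
\end{align}
where the last equality uses the orthonormality of the rows of $U$ (equivalently, the columns of $U$ have unit norm). Hence every vertex $v$ satisfies $\sum_{i=1}^n v_i^2 = \sum_{i=1}^n a_i^2$, which is exactly the claim. There is no real obstacle here once one identifies the right parameterization; the only mild subtlety is realizing that the sign choices $\epsilon_k$ drop out of $\|v\|_2^2$, so all $2^n$ vertices land on the same hypersphere.
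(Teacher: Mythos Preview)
Your proof is correct and follows essentially the same approach as the paper: parameterize the circumscribing orthotope by an orthonormal set of normal directions, invoke Lemma~\ref{lem:tangent_hyperplane} to write each face as $\langle u^{(k)},v\rangle=\pm d_k$, and then use orthogonality of $U$ to conclude $\|v\|_2^2=\sum_k d_k^2=\sum_j a_j^2$. The only cosmetic differences are that the paper expands $\sum_k(\langle u^{(k)},v\rangle)^2$ by hand rather than using $\|Uv\|_2=\|v\|_2$, and it does not explicitly track the sign choices $\epsilon_k$ (working with one fixed vertex), whereas you note that the signs drop out so all $2^n$ vertices are covered at once.
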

\begin{proof}
When $n=2$, the hypersphere reduces to a circle that is well-known as the orthoptic circle of an ellipse \citep{casey1893treatise}. We hereby generalize this result to any $n$.

A vertex of a circumscribed orthotope can be viewed as the intersection of $n$ tangent hyperplanes. According to Lemma \ref{lem:tangent_hyperplane}, we can express them as:
\begin{align}
\sum_{j=1}^n u_{i,j} v_j = \sqrt{\sum_{j=1}^n a_j^2 u_{i,j}^2}, \quad \forall i \in \{1, 2, \cdots, n\}, \label{eq:tangent_hyperplanes}
\end{align}
where $i$ is the index of the $i$-th hyperplane. Here we let $\sum_{j=1}^n u_{i,j}^2 = 1, \forall j \in \{1, 2, \cdots, n\}$. Besides, these hyperplanes are perpendicular to each other, so the coefficients also satisfy $\sum_{i=1}^n u_{i,j}u_{i,k} = 0$, $\forall k \neq j$. So if we let $\Omega \in \reals^{n \times n}$ be a matrix with $u_{i,j}$ on the $i$-th row and $j$-th column, $\forall i, j$, then $\Omega$ is an orthogonal matrix, and we further have $\sum_{i=1}^n u_{i,j}^2 = 1, \forall i \in \{1, 2, \cdots, n\}$.

Thus the considered vertex satisfies Eq. (\ref{eq:tangent_hyperplanes}), $\forall i \in \{1, 2, \cdots, n\}$, which implies it also satisfies:
\begin{align}
\sum_{i=1}^n (\sum_{j=1}^n u_{i,j} v_j)^2 = \sum_{i=1}^n \sum_{j=1}^n a_j^2 u_{i,j}^2.
\end{align}
For the left hand side, we have
\begin{align}
& \sum_{i=1}^n (\sum_{j=1}^n u_{i,j} v_j)^2 \\
= & \sum_{i=1}^n \sum_{j=1}^n u_{i,j}^2 v_j^2 + 2 \sum_{i=1}^n \sum_{j\neq p} u_{i,j}u_{i,k} v_j v_p\\ 
= & \sum_{j=1}^n (\sum_{i=1}^n  u_{i,j}^2) v_j^2 + 2  \sum_{j\neq p}  (\sum_{i=1}^n u_{i,j} u_{i,k}) v_j v_p\\
= & \sum_{j=1}^n v_j^2.
\end{align}
And for the right hand side,
\begin{align}
\sum_{i=1}^n \sum_{j=1}^n a_j^2 u_{i,j}^2 = \sum_{j=1}^n (\sum_{i=1}^n u_{i,j}^2) a_j^2 = \sum_{j=1}^n a_j^2.
\end{align}
Thus the vertex is on hypersphere $\{ v \in \reals^n | \sum_{i=1}^n v_i^2  = \sum_{i=1}^n a_i^2\}$.
\end{proof}

Since the equation of a centered hypersphere after rotation remains unchanged, we have the following corollary.
\begin{cor}[Locus of the vertices of circumscribed orthotope after rotation]\label{lem:locus_vertices_rotation}
The vertices of any orthotope that circumscribes hyperellipsoid $\{ v \in \reals^n| \sum_{i=1}^n v_i^2 / a_i^2 = 1\}$ after rotation is on hypersphere $\{ v \in \reals^n| \sum_{i=1}^n v_i^2  = \sum_{i=1}^n a_i^2\}$.
\end{cor}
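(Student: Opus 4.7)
The plan is to reduce the rotated situation directly to the un-rotated Lemma \ref{lem:locus_vertices} by exploiting two invariance properties: first, that orthotopes are preserved by rigid rotations, and second, that a centered hypersphere is invariant under any orthogonal transformation. Concretely, let $R \in \reals^{n \times n}$ be a rotation matrix and write the rotated hyperellipsoid as $R(\mathcal{E})$, where $\mathcal{E} = \{v \in \reals^n : \sum_{i=1}^n v_i^2 / a_i^2 = 1\}$. Let $O$ be any orthotope circumscribing $R(\mathcal{E})$, with vertices $\tilde v_1, \dots, \tilde v_{2^n}$.

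The first step would be to pull $O$ back by $R^\top$. Since $R^\top$ is a rigid motion, $R^\top(O)$ is again an orthotope (lengths, angles, and orthogonality between pairs of facets are all preserved), and it circumscribes $R^\top(R(\mathcal{E})) = \mathcal{E}$. Thus $R^\top(O)$ falls exactly within the scope of Lemma \ref{lem:locus_vertices}, which tells us that its vertices $R^\top \tilde v_k$ satisfy $\|R^\top \tilde v_k\|_2^2 = \sum_{i=1}^n a_i^2$.

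The second step is to push this conclusion forward through $R$. Since $R$ is orthogonal, $\|\tilde v_k\|_2 = \|R^\top \tilde v_k\|_2$, so $\sum_{i=1}^n (\tilde v_k)_i^2 = \sum_{i=1}^n a_i^2$, i.e., each $\tilde v_k$ lies on the hypersphere $\{v \in \reals^n : \sum_i v_i^2 = \sum_i a_i^2\}$, as claimed.

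There is really no serious obstacle here; the corollary is essentially a one-line observation once Lemma \ref{lem:locus_vertices} is in hand. The only point that warrants a brief justification is the claim that $R^\top(O)$ is still an orthotope circumscribing $\mathcal{E}$, so that Lemma \ref{lem:locus_vertices} applies verbatim; this is immediate because a rotation maps mutually perpendicular tangent hyperplanes of $R(\mathcal{E})$ to mutually perpendicular tangent hyperplanes of $\mathcal{E}$.
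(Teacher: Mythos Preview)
Your proposal is correct and follows essentially the same approach as the paper: the paper's justification is the single sentence ``Since the equation of a centered hypersphere after rotation remains unchanged,'' and your argument simply unpacks this by explicitly pulling back the circumscribed orthotope through $R^\top$, invoking Lemma~\ref{lem:locus_vertices}, and then using the orthogonality of $R$ to transfer the conclusion forward.
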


Now we can proceed to the proof of Proposition \ref{prop:opt_U}.
\begin{proof}
We first consider the case when $n$ is a power of $2$. 

For the rotated hyperellipsoid, we consider whether there's any point $\tilde{v} \in \reals^n$ s.t. $\|\tilde{v}\|_1 \geq \sqrt{\sum_{j=1}^n a_j^2}$. Suppose we couldn't find such a point. Then consider any tangent hyperplane whose normal vector has the following form: $(u_1, u_2, \cdots, u_n)$, s.t. $u_i = \pm 1, \forall i \in \{1, 2, \cdots, n\}$. It can be expressed as:
\begin{align}
\sum_{j=1}^n u_j v_j = W(u_1, u_2, \cdots, u_n),
\end{align}
where $W: (u_1, u_2, \cdots, u_n) \mapsto \reals$ maps $(u_1, u_2, \cdots, u_n)$ to a corresponding constant. And we have $W(u_1, u_2, \cdots, u_n) < \sqrt{\sum_{j=1}^n a_j^2}$.

Since $n$ is a power of $2$, we can find a Hadamard matrix $\Omega$ in $\reals^{n \times n}$ whose elements are either 1 or -1, such that $\Omega \Omega^\top = n I$. We let $u_{i,j}$ be the element of $R$ on the $i$-th row and $j$-th column, $\forall i, j$, and consider the intersection of the following $n$ tangent hyperplanes ($\forall i \in \{1, 2, \cdots, n\}$):
\begin{align}
\sum_{j=1}^n u_{i,j} v_j = W(u_{i, 1}, u_{i, 2}, \cdots, u_{i, n}),
\end{align}
whose intersection point must satisfy:
\begin{align}
& \sum_{i=1}^n (\sum_{j=1}^n u_{i,j} v_j)^2\\ 
= & \sum_{i=1}^n \sum_{j=1}^n (W(u_{i, 1}, u_{i, 2}, \cdots, u_{i, n}))^2.
\end{align}

Similar to the proof of Lemma \ref{lem:locus_vertices}, we can easily prove the left hand side equals $n \sum_{j=1}^n v_j^2$, and the right hand side is strictly less than $n \sum_{j=1}^n a_j^2$. Thus the intersection point doesn't locate on the hypersphere $\{ v \in \reals^n| \sum_{i=1}^n v_i^2  = \sum_{i=1}^n a_i^2\}$. But since the considered $n$ hyperplanes are also the surfaces of a circumscribed orthotope, the intersection point is hence a vertex and must be on the hypersphere $\{ v \in \reals^n| \sum_{i=1}^n v_i^2  = \sum_{i=1}^n a_i^2\}$, according to Corollary \ref{lem:locus_vertices_rotation}. This contradiction means that, the assumption that we cannot find a point $\tilde{v} \in \reals^n$ s.t. $\|\tilde{v}\|_1 \geq \sqrt{\sum_{j=1}^n a_j^2}$ is false.

Thus the point $\tilde{v} \in \reals^n$ s.t. $\|\tilde{v}\|_1 \geq \sqrt{\sum_{j=1}^n a_j^2}$ exists, and the $\ell_1$ distance between $\tilde{v}$ and $-\tilde{v}$ (which both lie on the hyperepllisoid) is $2\sqrt{\sum_{j=1}^n a_j^2}$. Thus we have $\Delta_1 g_e \geq 2 \sqrt{\sum_{j=1}^n a_j^2}$. 

For $U = I$, which means we don't actually rotate the ellipsoid $\{ v \in \reals^n| \sum_{i=1}^n v_i^2 / a_i^2 = 1\}$, we have for any point on the ellipsoid 
\begin{align}
(\sum_{i=1}^n |v_i|)^2 
= & (\sum_{i=1}^n a_i \cdot \frac{|v_i|}{a_i})^2\\ 
\leq  & (\sum_{i=1}^n a_i^2) (\sum_{i=1} \frac{v_i^2}{a_i^2}) = \sum_{i=1}^n a_i^2,
\end{align}
according to the Cauchy-Schwartz inequality. This implies any point $v$ on the hyperellipsoid has $\|v\|_1 \leq \sqrt{\sum_{j=1}^n a_j^2}$. So $\Delta_1 g_e \leq 2 \sqrt{\sum_{j=1}^n a_j^2}$. Combined with the result in the above paragraph we know $\Delta_1 g_e = 2 \sqrt{\sum_{j=1}^n a_j^2}$ for $U = I$. 

Thus the proposition is proved for $n$ being a power of $2$. For other $n$'s, we can treat the considered hyperellipsoid as a degenerated hyperellipsoid in space $\reals^{\tilde{n}}$, where $\tilde{n}$ is the smallest power of $2$ such that $\tilde{n} > n$. This implies the proposition still holds.

Therefore, the proposition is true for any $n$.
\end{proof}


\subsection{Proof of Proposition \ref{prop: opt_sigma}}

\begin{proof}
First, to preserve $\epsilon$-LDP with Laplace mechanism, the minimum $\sigma_w^2$ required is:
\begin{align}
\sigma_w^2 = 2 \cdot \frac{(\Delta_1 g_e)^2}{\epsilon^2} = 2 \cdot \frac{4r^2 \cdot \sum_{i=1}^n \sigma_i^2}{\epsilon^2} = \frac{8r^2M}{\epsilon^2}, \label{eq:w_laplace}
\end{align}
based on Proposition \ref{prop:opt_U} and constraint $\sum_{i=1}^n \sigma_i^2 = M$. 

Next we need to determine $\sigma_1^2, \cdots, \sigma_n^2$. The considered problem is an optimization problem which aims at minimizing $\loss$ in Eq. (\ref{eq:loss_Sigma}) under constraint $\sum_{i=1}^n \sigma_i^2 = M$ and $\sigma_i^2 \geq 0$ (here we view $\sigma_i^2$ instead of $\sigma_i$ as the decision variable). Note that though in Eq. (\ref{eq:loss_Sigma}) we also have $\sigma_1^2 \geq \sigma_n^2$, we don't need to explicitly consider this constraint, because minimizing $\loss$ will implicitly guarantee that, according to the rearrangement inequality. This problem can be solved by Karush-Kuhn-Tucker (KKT) approach, with the following Lagrangian function:
\begin{align}
& F(\sigma_1^2, \cdots, \sigma_n^2, \alpha_1, \cdots, \alpha_n, \beta)\\ 
= & \sum_{i=1}^n \lambda_i - \sum_{i=1}^n \lambda_i \frac{\sigma_i^2}{\sigma_i^2+\sigma_w^2} \\
& + \sum_{i=1}^n \alpha_i(-\sigma_i^2) + \beta(\sum_{i=1}^n \sigma_i^2 - M), 
\end{align}
where $\alpha_1, \cdots, \alpha_n$ and $\beta$ are Lagrangian multipliers. We know that the solution will automatically guarantee $\sigma_1^2 \geq \cdots \geq \sigma_n^2$, so we can safely assume there exists $Z' \leq n$ such that $\alpha_i = 0$ for $i \leq Z'$, and $\alpha_i > 0$ for $i > Z'$. Then for $i > Z'$ we have $\sigma_i^2 = 0$, and for $i \leq Z'$ we have
\begin{align}
\frac{\nabla F}{\nabla \sigma_i^2} = -\frac{\sigma_w^2}{(\sigma_i^2 + \sigma_w^2)^2} \lambda_i = \beta.
\end{align}
Combining with $\sum_{i=1}^{Z'} \sigma_i^2 = M$ and Eq. (\ref{eq:w_laplace}) we eventually get Eq. (\ref{eq:opt_sigma_laplace}). Enforcing $\sigma_{Z'}^2 > 0$ we get Eq. (\ref{eq:opt_T_laplace}). And plugging Eq. (\ref{eq:opt_sigma_laplace}) into Eq. (\ref{eq:loss_Sigma}) we get Eq. (\ref{eq:loss_laplace}).   
\end{proof}


\subsection{Proof of Theorem \ref{thm:opt_loss_bound}}

In this subsection we give the proof of Theorem \ref{thm:opt_loss_bound}.

We start with two definitions: $\loss^*(\partial \domainS; P, \epsilon)$ denotes the optimal loss for any boundary $\partial \domainS$, task matrix $P$ that preserves $\epsilon$-LDP with Laplace mechanism; $R(r) = \{ h \in \reals^n | \|h\|_2^2 = r^2\}$ is the centered hypersphere with radius $r$. 

Next we give the following lemma.
\begin{lem}[Invariance of the optimal loss after scaling]\label{lem:opt_loss_invariance}
\begin{align}
\loss^*(\partial \domainS; P, \epsilon) = \loss^*(\partial \rho(\domainS); P, \rho\epsilon),
\end{align}
where $\rho > 0$ is a scalar and $\rho(\domainS) = \{\rho h | h \in \domainS\}$.
\end{lem}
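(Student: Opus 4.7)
The plan is to exhibit an explicit bijection between the set of admissible encoder-decoder pairs $(E,D)$ for the original problem $(\partial\domainS;P,\epsilon)$ and the set of admissible pairs $(E',D')$ for the scaled problem $(\partial\rho(\domainS);P,\rho\epsilon)$, given by $E' = E/\rho$, $D' = \rho D$ (with the same latent dimension $Z$), and to show that this bijection preserves both the privacy constraint and the value of $\loss$, from which the equality of the optima follows immediately.

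First I would check that sensitivity is invariant under this correspondence. Any $h' \in \rho(\domainS)$ can be written as $h' = \rho \tilde h$ with $\tilde h \in \domainS$, so
\begin{align}
\Delta_1 g'_e &= \max_{h'_1,h'_2 \in \rho(\domainS)} \|E'(h'_1 - h'_2)\|_1 = \max_{\tilde h_1,\tilde h_2 \in \domainS} \|(E/\rho)(\rho\tilde h_1 - \rho\tilde h_2)\|_1 \notag\\
&= \max_{\tilde h_1,\tilde h_2 \in \domainS} \|E(\tilde h_1 - \tilde h_2)\|_1 = \Delta_1 g_e.
\end{align}
Hence the Laplace noise required to enforce $\rho\epsilon$-LDP in the scaled problem has variance $\sigma_{w'}^2 = 2(\Delta_1 g'_e/(\rho\epsilon))^2 = \sigma_w^2/\rho^2$, where $\sigma_w^2 = 2(\Delta_1 g_e/\epsilon)^2$ is the corresponding variance for the original problem.

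Next I would substitute into the loss expression (\ref{eq:loss_DE}), which only depends on the second-moment structure $\Sigma_{hh} = I$ fixed at the beginning of Section~\ref{subsec:linear_model}. The reconstruction factor is unchanged because $D'E' = (\rho D)(E/\rho) = DE$, so $(D'E' - I) = (DE - I)$. The noise term scales as
\begin{align}
\tr(P D' \sigma_{w'}^2 I (D')^\top P^\top) = \rho^2 \sigma_{w'}^2 \tr(P D D^\top P^\top) = \sigma_w^2 \tr(P D D^\top P^\top),
\end{align}
matching the noise term in the original problem. Therefore $\loss'(E',D') = \loss(E,D)$ for every paired choice.

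The map $(E,D) \mapsto (E/\rho,\rho D)$ is clearly invertible (its inverse is $(E',D') \mapsto (\rho E', D'/\rho)$), and both forward and inverse preserve the $\epsilon$-LDP and $\rho\epsilon$-LDP constraints respectively by the sensitivity computation above. Consequently the feasible sets are in one-to-one correspondence with equal loss values, so their minima coincide: $\loss^*(\partial\domainS;P,\epsilon) = \loss^*(\partial\rho(\domainS);P,\rho\epsilon)$. The main subtlety — and the only place one must be careful — is ensuring that the loss formula only depends on $\Sigma_{hh}$ (so that ``scaling the domain'' is compatible with keeping $\Sigma_{hh} = I$ on both sides); once that is granted, the rest is a direct substitution.
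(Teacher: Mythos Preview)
Your proof is correct, but it takes a slightly more elaborate route than the paper. The paper observes that the \emph{same} triple $(E,D,\sigma_w^2)$ works for both problems: the sensitivity of $E$ on $\rho(\domainS)$ is exactly $\rho$ times its sensitivity on $\domainS$, so the Laplace constraint $\sigma_w \ge \sqrt{2}\,\Delta_1 g_e/\epsilon$ is identical to $\sigma_w \ge \sqrt{2}\,(\rho\,\Delta_1 g_e)/(\rho\epsilon)$, and since the loss formula \eqref{eq:loss_DE} depends only on $E,D,\sigma_w^2,P$ (not on the domain), the loss is unchanged. In other words, the paper's bijection is the identity map on $(E,D,\sigma_w^2)$, whereas you rescale to $(E/\rho,\rho D,\sigma_w^2/\rho^2)$ and then verify that both the privacy constraint and each term of the loss are preserved. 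Both arguments are valid and rest on the same underlying observation; the paper's version is simply shorter because it avoids tracking how three separate quantities transform. On the other hand, you explicitly flag the subtlety that the loss formula depends on the boundary $\partial\domainS$ only through the privacy constraint (the expectation being fixed by $\Sigma_{hh}=I$), a point the paper leaves implicit.
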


\begin{proof}
We only need to consider fixed task matrix $P$. For any encoder $E$, decoder $D$ and noise variance $\sigma_w^2$, if they preserve $\epsilon$-LDP with Laplace mechanism for boundary $\partial \domainS$, then we have 
\begin{align}
\sigma_w \geq \sqrt{2} \cdot \frac{\Delta_1 g_e}{\epsilon} = \sqrt{2} \cdot \frac{\Delta_1 \rho g_e}{\rho \epsilon},
\end{align}
where
\begin{align}
\Delta_1 \rho g_e 
= & \rho \max_{h, h' \in \domainS} \|Eh - Eh'\|_1\\ 
= & \max_{v, v' \in \rho(\domainS)} \|Ev - Ev'\|_1.
\end{align}
Thus we know encoder $E$, decoder $D$ and noise variance $\sigma_w^2$ also preserve $\rho \epsilon$-LDP with Laplace mechanism for boundary $\partial \rho(\domainS)$. And the reverse also holds true.

So we must have $\loss^*(\partial \domainS; P, \epsilon) = \loss^*(\partial \rho(\domainS); P, \rho\epsilon)$.
\end{proof}

Now we can proceed to the proof of Theorem \ref{thm:opt_loss_bound}.

\begin{proof}

We first construct a distribution $\dist_{h'}$, s.t. points drawn from $\dist_{h'}$ are uniformly distributed on $R(2^{n-1})$. Then $h' \sim \dist_{h'}$ satisfies $\Sigma_{h'h'} = I$. And we also have $\partial \domainS' = R(2^{n-1})$, where $\domainS'$ is the convex hull of $H'$ ($H'$ is the domain of $h' \sim \dist_{h'}$). According to Proposition \ref{prop: opt_sigma}, for $h' \sim \dist_{h'}$ we have $\loss^*(R(2^{n-1}); P, \rho \epsilon) = \loss(2^{n-1}; \lambda_{1:n}, \rho \epsilon)$, $\forall \rho, \epsilon > 0$.

We next consider scalar $\rho = 2^{n-1} / r_\text{min}$ and the optimal loss $\loss^*(\partial \rho(\domainS); P, \rho \epsilon)$. For any encoder $E$, decoder $D$ and noise variance $\sigma_w^2$, if they preserve $\rho \epsilon$-LDP to boundary $\partial \rho(\domainS)$, then they preserve at least $\rho \epsilon$-LDP to boundary $R(2^{n-1})$, since $R(2^{n-1}) \subset \rho(\domainS)$. Thus we have $\loss^*(\partial \rho(\domainS); P, \rho \epsilon) \geq \loss^*(R(2^{n-1}); P, \rho \epsilon)$.

This further implies:
\begin{align}
& \loss^*(\partial \domainS; P, \epsilon) = \loss^*(\partial \rho(\domainS); P, \rho \epsilon)\\
\geq & \loss^*(R(2^{n-1}); P, \rho \epsilon)\\
= & \loss(2^{n-1}; \lambda_{1:n}, \rho \epsilon) = \loss(r_\text{min}; \lambda_{1:n}, \epsilon).
\end{align}

So the lower bound of Theorem \ref{thm:opt_loss_bound} is proved. The upper bound can be proved in the same way.
\end{proof}

\subsection{Benchmarks}\label{subsec:benchmarks}

\subsubsection{Task loss for task-agnostic approach}\label{subsubsec:task_loss_task_agnostic}

One can obtain the resultant optimal $\loss$ for the task-agnostic approach by letting $E = L$ and using Eq. (\ref{eq:loss_E}) as stated in Proposition \ref{prop:opt_decoder}. It is worth noting that the associated decoder $D = L^\top (L L^\top + \sigma_w^2 I)^{-1}$ is not an identity matrix in general. 

\begin{cor}[Optimal $\loss$ for the task-agnostic approach that preserves $\epsilon$-LDP]\label{cor:opt_loss_task_agnostic}
For the task-agnostic approach, the optimal $\loss$ that preserves $\epsilon$-LDP is
\begin{align}
\loss = \tr(P^\top P) - \tr(P^\top P  L^\top (LL^\top + \sigma_w^2 I)^{-1} L), \label{eq:loss_task_agnostic}
\end{align}
where $\sigma_w^2 = 2 (\Delta_1 g_e)^2 / \epsilon^2$ with $g_e(x) = x$. 
\end{cor}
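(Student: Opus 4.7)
The plan is to obtain Corollary \ref{cor:opt_loss_task_agnostic} as a direct instantiation of Proposition \ref{prop:opt_decoder}. The only subtlety is translating the task-agnostic encoder $g_e(x)=x$ into the whitened representation $h = L^{-1}(x-\mu_x)$ used throughout Section \ref{subsec:linear_model}. Since $x - \mu_x = Lh$, the map $\phi = x$ becomes $\phi - \mu_x = Lh$ in the $h$-coordinates, and by the footnote observing that a constant offset does not affect $\ell_1$-sensitivity, the effective encoder matrix in $h$-coordinates is $E = L$.

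First, I would substitute $E = L$ directly into Proposition \ref{prop:opt_decoder}. This immediately yields the optimal decoder $D = L^\top(LL^\top + \sigma_w^2 I)^{-1}$ that minimizes $\loss$ for this fixed encoder, together with the corresponding optimal loss Eq. (\ref{eq:loss_task_agnostic}). Note that even though the raw encoder $g_e$ is the identity on $x$, the optimal $D$ is not an identity matrix, because the decoder is free to shrink each coordinate based on the noise-to-signal ratio, exactly as in Wiener-style estimation.

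Second, I would pin down the value of $\sigma_w^2$ required to preserve $\epsilon$-LDP with the Laplace mechanism. By Eq. (\ref{eq:lap}), the Laplace mechanism adds i.i.d.\ Laplace noise of scale $b = \Delta_1 g_e / \epsilon$, and each coordinate of such noise has variance $2b^2$. Applied to $g_e(x) = x$, this gives $\sigma_w^2 = 2(\Delta_1 g_e)^2/\epsilon^2$ with $\Delta_1 g_e = \max_{x,x'\in \domainX}\|x-x'\|_1$. Combining this with the expression from Proposition \ref{prop:opt_decoder} establishes the corollary.

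There is essentially no hard step here; the only thing to be careful about is the change of variables between $x$ and $h$ and the fact that Proposition \ref{prop:opt_decoder} is stated in the $h$-representation but the task-agnostic benchmark is naturally described in the $x$-representation. Since $K(\hat{x}-x) = P(\hat{h}-h)$ with $P = KL$, the loss expressions coincide, and plugging $E = L$ directly gives the stated formula without any further computation.
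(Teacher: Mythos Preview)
Your proposal is correct and follows exactly the paper's approach: the paper simply states that the corollary is obtained by letting $E = L$ in Proposition \ref{prop:opt_decoder} (Eq.~(\ref{eq:loss_E})), and notes the resulting decoder $D = L^\top(LL^\top + \sigma_w^2 I)^{-1}$ is not the identity. Your additional care in justifying the coordinate change from $x$ to $h$ and the computation of $\sigma_w^2$ from the Laplace mechanism is more explicit than the paper's own treatment, but the argument is the same.
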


\begin{table*}[t]
\caption{Evaluation Details}
\label{tab:eval}
\vskip 0.15in
\begin{center}
\begin{small}
\begin{sc}
\begin{tabular}{lcccr}
\toprule
Application & Num of Samples & Train/Test Split & Training Epochs & Runtime\\
\midrule
Household Power & 1417 & 0.7/0.3 & NA & $<1$ min\\
Real Estate & 414 & 0.7/0.3 & 2000 & $<2$ hrs\\
Breast Cancer & 569 & 0.7/0.3 & 2000 & $<2$ hrS \\
\bottomrule
\end{tabular}
\end{sc}
\end{small}
\end{center}
\vskip -0.1in
\end{table*}

\subsubsection{Task loss for privacy-agnostic approach}\label{subsubsec:task_loss_privacy_agnostic}

\JC{Through similar analysis as Proposition \ref{prop:opt_V}, one can obtain the resultant optimal $\loss$ for the privacy-agnostic approach, which has a pre-determined $Z \leq n$.}

\JC{
\begin{cor}[Optimal $\loss$ for the privacy-agnostic approach that preserves $\epsilon$-LDP]\label{cor:opt_loss_privacy_agnostic}
For the privacy-agnostic approach with a pre-determined $Z \leq n$, the optimal $\loss$ that preserves $\epsilon$-LDP is
\begin{align}
\loss = \sum_{i=1}^Z \lambda_i \frac{\sigma_w^2}{\sigma_i^2+\sigma_w^2} + \sum_{i=Z+1}^n \lambda_i. \label{eq:loss_privacy_agnostic}
\end{align}
where $\sigma_w^2 = 2 (\Delta_1 g_e)^2 / \epsilon^2$.
\end{cor}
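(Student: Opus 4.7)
The plan is to follow the same lines as the proof of Proposition~\ref{prop:opt_V}, specialized to the privacy-agnostic setting where (i) the encoder is selected with $w = \bm{0}$, and (ii) only $Z \le n$ of the singular values of $E$ are allowed to be non-zero. First I would characterize the noise-free optimal encoder, then invoke Proposition~\ref{prop:opt_decoder} to obtain the decoder that is actually used once Laplace noise is re-injected at deployment time.

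Setting $w = \bm{0}$ in Eq.~(\ref{eq:loss_DE}) collapses the objective to $\tr(P(DE-I)(DE-I)^\top P^\top) = \|P(DE - I)\|_F^2$, which is a weighted rank-$Z$ approximation problem. Writing $E = U\Sigma V^\top$ with only the top $Z$ diagonal entries of $\Sigma$ non-zero, and applying Ruhe's trace inequality exactly as in the proof of Proposition~\ref{prop:opt_V} (or equivalently Eckart--Young applied to $PM - P$ with $M = DE$ of rank at most $Z$), I conclude that the optimum requires $V = Q$ and that the non-zero singular values must be paired with the top $Z$ eigenvalues $\lambda_1, \ldots, \lambda_Z$. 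The values $\sigma_1, \ldots, \sigma_Z$ themselves are not pinned down at this step because $D$ absorbs any row-scaling; only the row space of $E$ is determined.

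Next, re-introducing the noise, Proposition~\ref{prop:opt_decoder} supplies the optimal decoder $D = E^\top(EE^\top + \sigma_w^2 I)^{-1}$ with $\sigma_w^2 = 2(\Delta_1 g_e)^2 / \epsilon^2$, and Eq.~(\ref{eq:loss_E}) gives $\loss = \tr(P^\top P) - \tr(P^\top P\, E^\top(EE^\top + \sigma_w^2 I)^{-1} E)$. Repeating the diagonalization step from the proof of Proposition~\ref{prop:opt_V} with only the top $Z$ singular values present yields $E^\top(EE^\top + \sigma_w^2 I)^{-1} E = Q\,\mathrm{diag}\!\bigl(\tfrac{\sigma_1^2}{\sigma_1^2+\sigma_w^2}, \ldots, \tfrac{\sigma_Z^2}{\sigma_Z^2+\sigma_w^2}, 0, \ldots, 0\bigr) Q^\top$. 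Cycling the trace with $P^\top P = Q \Lambda Q^\top$ gives $\sum_{i=1}^{Z} \lambda_i \tfrac{\sigma_i^2}{\sigma_i^2 + \sigma_w^2}$, and rearranging $\sum_{i=1}^{n}\lambda_i - \sum_{i=1}^{Z}\lambda_i \tfrac{\sigma_i^2}{\sigma_i^2+\sigma_w^2}$ reproduces Eq.~(\ref{eq:loss_privacy_agnostic}).

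The main subtlety is not computational but definitional: the noise-free optimum for $E$ is highly non-unique (any invertible $Z \times Z$ rescaling of the rows leaves $DE$ fixed while simultaneously rescaling the $\sigma_i$'s, and $\Delta_1 g_e$ scales linearly in $E$ so $\sigma_w$ scales in lockstep). I would resolve this by noting that Eq.~(\ref{eq:loss_privacy_agnostic}) is invariant under the joint rescaling $(\sigma_i, \sigma_w) \mapsto (c\sigma_i, c\sigma_w)$, so the stated formula characterizes the deployment loss faithfully for any admissible noise-free optimizer, with $\sigma_1,\ldots,\sigma_Z$ interpreted as whatever singular values the practitioner happens to select.
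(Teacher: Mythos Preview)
Your proposal is correct and follows essentially the same route as the paper: assert (or, in your case, justify via Eckart--Young/Ruhe) that the noise-free encoder picks $V=Q$ with the top-$Z$ principal directions, then re-run the diagonalization from the proof of Proposition~\ref{prop:opt_V} with only $Z$ non-zero singular values and invoke Proposition~\ref{prop:opt_decoder} for the decoder. The paper's own proof is terser---it simply states ``since for encoder we need to select the top-$Z$ principal components, we have $V=Q$'' and then repeats the $EE^\top+\sigma_w^2 I$ computation---whereas you supply the missing rank-$Z$ approximation argument and explicitly flag the non-uniqueness of the noise-free optimizer, which the paper leaves implicit (and only resolves later, in Section~\ref{subsec:benchmarks}, by assuming equal $\sigma_i$'s).
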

}


\JC{
\begin{proof}
For privacy-agnostic approach, we have $Z \leq n$. Since for encoder we need to select the top-$Z$ principal components, we have $V = Q$.
Similar to the proof of Proposition \ref{prop:opt_V}, for a given $Z \leq n$, we have 
\begin{align}
& EE^\top + \sigma_w^2 I \\
= & U \Sigma \Sigma^\top  U^\top + \sigma_w^2 I \\
= & U \text{diag}(\sigma_1^2+\sigma_w^2, \cdots, \sigma_Z^2 + \sigma_w^2) U^\top,
\end{align}
and through similar derivations we eventually get
\begin{align}
& \tr(P^\top P  E^\top (EE^\top + \sigma_w^2 I)^{-1} E) \\
= & \sum_{i=1}^Z \lambda_i \frac{\sigma_i^2}{\sigma_i^2+\sigma_w^2}.
\end{align}
Combined with Eq. (\ref{eq:loss_E}) we get Eq. (\ref{eq:loss_privacy_agnostic}).
\end{proof}
}

\subsubsection{Task loss for benchmark approaches when $L = I$ and Assumption \ref{assumption:boundary_h} holds}

When $L = I$ and Assumption \ref{assumption:boundary_h} holds, for the \JC{task-agnostic} approach, according to Eq. (\ref{eq:loss_task_agnostic}), we have
\begin{align} 
\loss = \frac{\sigma_w^2}{1 + \sigma_w^2} \cdot \tr(P^\top P) = \frac{n \cdot 8r^2/\epsilon^2}{1 + n \cdot8r^2/\epsilon^2} \sum_{i=1}^n \lambda_i.
\end{align}
\JC{
For the privacy-agnostic approach, according to Eq. (\ref{eq:loss_privacy_agnostic}) and Eq. (\ref{eq:opt_g_e}), and assuming we have equal $\sigma_i$'s and minimum $\Delta_1 g_e$, we get
\begin{align}
\loss = \frac{Z \cdot 8r^2/\epsilon^2}{1 + Z \cdot 8r^2/\epsilon^2} \sum_{i=1}^{Z} \lambda_i + \sum_{i=Z+1}^n \lambda_i,
\end{align}
where the value of $Z$ is pre-determined. 
}

\subsubsection{Benchmark algorithms under general settings}\label{subsubsec:benchmark_algorithms}
\JC{
For the privacy-agnostic approach, we first train the encoder and decoder without considering privacy preservation by updating $\theta_e$ and $\theta_d$ with $- \nabla_{\theta_e} \loss$ and $- \nabla_{\theta_d} \loss$, respectively. Next, we fix encoder parameters $\theta_e$ and train the decoder with input $\phi + w$ (a modification of Algorithm \ref{alg:ldp_codesign} line 3-4). The task-agnostic approach trains the decoder in the same way, but fixes $g_e$ to an identity mapping function.
}

\subsection{Configuration and Training Details of Evaluation}

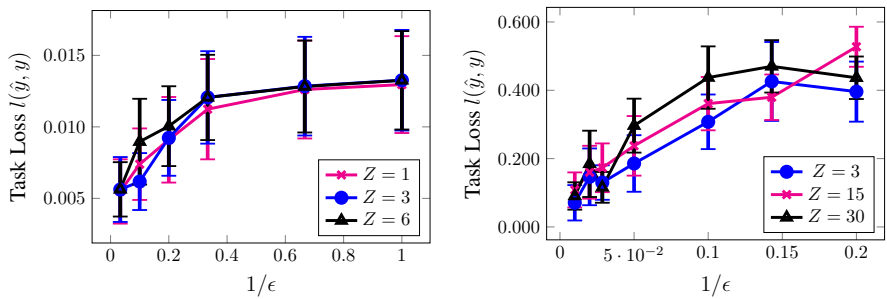
\begin{figure*}[t]
\begin{center}
\subfigure{
\begin{tikzpicture}[scale=0.7]
    \pgfplotsset{normalsize,samples=10}
    \begin{axis}[ height=6cm, width=8cm,
                  legend pos=south east,
                  xlabel={$1/\epsilon$},
                  ylabel={Task Loss $l(\hat{y},y)$},
                  xlabel style={font=\large},
                  ylabel style={font=\large},
                  yticklabel style={
                      /pgf/number format/fixed,
                      /pgf/number format/precision=3,
                      /pgf/number format/fixed zerofill
                  },
                  scaled y ticks=false,
                  every axis plot/.append style={ultra thick},
                  mark size=3pt ]
            \addplot [magenta,mark=x, error bars/.cd, y dir=both, y explicit,
                      error bar style={line width=2pt,solid},
                      error mark options={line width=1pt,mark size=4pt,rotate=90}]
                      table [x=x, y=y, y error=y-err]{%
                        x y y-err
1.0 0.012956199236214161 0.0033861897923797186
0.6666666666666666 0.01261373981833458 0.003425562382201815
0.3333333333333333 0.011236859485507011 0.0035066174477080846
0.2 0.009096738882362843 0.0029917954946380664
0.1 0.0073908790946006775 0.002500525639034994
0.03333333333333333 0.005482975859194994 0.002248209601322963
                      };
            \addplot [blue,mark=*, error bars/.cd, y dir=both, y explicit,
                      error bar style={line width=2pt,solid},
                      error mark options={line width=1pt,mark size=4pt,rotate=90}]
                      table [x=x, y=y, y error=y-err]{%
                        x y y-err
1.0 0.013262755237519741 0.003519141942328132
0.6666666666666666 0.01284645777195692 0.0034488172117023565
0.3333333333333333 0.012060198001563549 0.0032328994354075768
0.2 0.0092306612059474 0.002650653216528747
0.1 0.00617626029998064 0.001990939097504191
0.03333333333333333 0.005620592273771763 0.002268863047587826
                      };
            \addplot [black,mark=triangle, error bars/.cd, y dir=both, y explicit,
                      error bar style={line width=2pt,solid},
                      error mark options={line width=1pt,mark size=4pt,rotate=90}]
                      table [x=x, y=y, y error=y-err]{%
                        x y y-err
1.0 0.013256651349365711 0.0034410025100541945
0.6666666666666666 0.012822112999856472 0.00321386703109817
0.3333333333333333 0.012058274820446968 0.0029818750212389795
0.2 0.010049733333289623 0.002791122965032998
0.1 0.008964921347796917 0.002998940148024738
0.03333333333333333 0.005630381405353546 0.0019012903003213283
                      };
            \legend{$Z=1$, $Z=3$, $Z=6$}
    \end{axis}
\end{tikzpicture}
}
\subfigure{
\begin{tikzpicture}[scale=0.7]
    \pgfplotsset{normalsize,samples=10}
    \begin{axis}[ height=6cm, width=8cm,
                  legend pos=south east,
                  xlabel={$1/\epsilon$},
                  ylabel={Task Loss $l(\hat{y},y)$},
                  xlabel style={font=\large},
                  ylabel style={font=\large},
                  yticklabel style={
                      /pgf/number format/fixed,
                      /pgf/number format/precision=3,
                      /pgf/number format/fixed zerofill
                  },
                  scaled y ticks=false,
                  every axis plot/.append style={ultra thick},
                  mark size=3pt ]
            \addplot [blue,mark=*, error bars/.cd, y dir=both, y explicit,
                      error bar style={line width=2pt,solid},
                      error mark options={line width=1pt,mark size=4pt,rotate=90}]
                      table [x=x, y=y, y error=y-err]{%
                        x y y-err
0.2 0.39601439237594604 0.0879301554709341
0.14285714285714285 0.4258347749710083 0.11548370977354301
0.1 0.30770692229270935 0.07992588438232681
0.05 0.18568047881126404 0.08279493600667165
0.02857142857142857 0.12989148497581482 0.05044583547939662
0.02 0.14664454758167267 0.08304224756566464
0.01 0.07061871141195297 0.05173729291334856
                      };
                      \addplot [magenta,mark=x, error bars/.cd, y dir=both, y explicit,
                      error bar style={line width=2pt,solid},
                      error mark options={line width=1pt,mark size=4pt,rotate=90}]
                      table [x=x, y=y, y error=y-err]{%
                        x y y-err
0.2 0.5273672342300415 0.05854918055389025
0.14285714285714285 0.37957751750946045 0.06658689169958106
0.1 0.3609107732772827 0.07809176046033711
0.05 0.23733974993228912 0.08719797433139048
0.02857142857142857 0.17308633029460907 0.0713522685413867
0.02 0.16005785763263702 0.07744177356846144
0.01 0.10838726907968521 0.05120745268694743
                      };
            \addplot [black,mark=triangle, error bars/.cd, y dir=both, y explicit,
                      error bar style={line width=2pt,solid},
                      error mark options={line width=1pt,mark size=4pt,rotate=90}]
                      table [x=x, y=y, y error=y-err]{%
                        x y y-err
0.2 0.43661201000213623 0.06221208824614651
0.14285714285714285 0.47001200914382935 0.07672801977639718
0.1 0.43719321489334106 0.09128774774896022
0.05 0.2964833676815033 0.07892360043810406
0.02857142857142857 0.11581387370824814 0.04505948941935834
0.02 0.18463535606861115 0.09703715214434622
0.01 0.09056708961725235 0.04002063112591141
                      };
            \legend{$Z=3$, $Z=15$, $Z=30$} 
    \end{axis}
\end{tikzpicture}
}

\vskip -0.1in

\caption{Task loss $l(\hat{y}, y)$ of task-aware approach under different $Z$'s for real estate valuation (left) and breast cancer detection (right). }
\label{fig:proper_z}
\end{center}
\end{figure*}

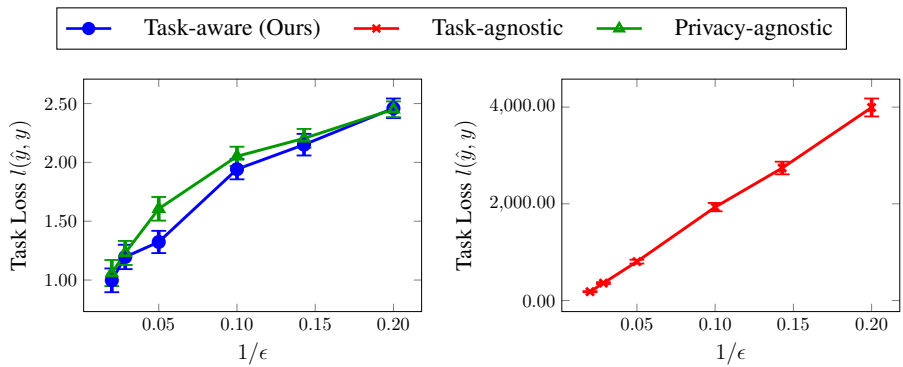
\begin{figure*}[t]
\begin{center}
\subfigure{
\begin{tikzpicture}[scale=0.5]
    \begin{customlegend}[
    legend entries={Task-aware (Ours),Task-agnostic,Privacy-agnostic},
    legend columns=3,
    legend style={
            /tikz/column 1/.style={
                column sep=10pt,
            },
            /tikz/column 2/.style={
                column sep=10pt,
                font=\small,
            },
            /tikz/column 3/.style={
                column sep=10pt,
            },
            /tikz/column 4/.style={
                column sep=10pt,
                font=\small,
            },
            /tikz/column 5/.style={
                column sep=10pt,
            },
            /tikz/column 6/.style={
                font=\small,
            },
        }]
    \addlegendimage{blue,mark=*,line width=1pt}
    \addlegendimage{red,mark=x,line width=1pt}
    \addlegendimage{green!60!black,mark=triangle,line width=1pt}
    \end{customlegend}
\end{tikzpicture}
}\\
\subfigure{
\begin{tikzpicture}[scale=0.7]
    \pgfplotsset{normalsize,samples=10}
    \begin{axis}[ height=6cm, width=8cm,
                  legend pos=south east,
                  xlabel={$1/\epsilon$},
                  ylabel={Task Loss $l(\hat{y},y)$},
                  xlabel style={font=\large},
                  ylabel style={font=\large},
                  xticklabel style={
                      /pgf/number format/fixed,
                      /pgf/number format/precision=2,
                      /pgf/number format/fixed zerofill
                  },
                  yticklabel style={
                      /pgf/number format/fixed,
                      /pgf/number format/precision=2,
                      /pgf/number format/fixed zerofill
                  },
                  scaled x ticks=false,
                  every axis plot/.append style={ultra thick},
                  mark size=3pt ]
              \addplot [blue,mark=*, error bars/.cd, y dir=both, y explicit,
                      error bar style={line width=2pt,solid},
                      error mark options={line width=1pt,mark size=4pt,rotate=90}]
                      table [x=x, y=y, y error=y-err]{%
                        x y y-err
0.2 2.459540605545044 0.08347419156120472
0.14285714285714285 2.150545835494995 0.09194598036207217
0.1 1.941974401473999 0.08598136535774374
0.05 1.323617935180664 0.09462344449324962
0.02857142857142857 1.1962549686431885 0.10356919079803907
0.02 0.9974884986877441 0.10099308714319592
                      };
            \addplot [green!60!black,mark=triangle, error bars/.cd, y dir=both, y explicit,
                      error bar style={line width=2pt,solid},
                      error mark options={line width=1pt,mark size=4pt,rotate=90}]
                      table [x=x, y=y, y error=y-err]{%
                        x y y-err
0.2 2.45146107673645 0.06708007769474086
0.14285714285714285 2.204211473464966 0.08033943648231466
0.1 2.0514776706695557 0.08168913163547656
0.05 1.6055234670639038 0.1004585959377819
0.02857142857142857 1.2303063869476318 0.10220117944728531
0.02 1.0596174001693726 0.11079058841847209
                      };
    \end{axis}
\end{tikzpicture}
}
\subfigure{
\begin{tikzpicture}[scale=0.7]
    \pgfplotsset{normalsize,samples=10}
    \begin{axis}[ height=6cm, width=8cm,
                  legend pos=south east,
                  xlabel={$1/\epsilon$},
                  ylabel={Task Loss $l(\hat{y},y)$},
                  xlabel style={font=\large},
                  ylabel style={font=\large},
                  xticklabel style={
                      /pgf/number format/fixed,
                      /pgf/number format/precision=2,
                      /pgf/number format/fixed zerofill
                  },
                  yticklabel style={
                      /pgf/number format/fixed,
                      /pgf/number format/precision=2,
                      /pgf/number format/fixed zerofill
                  },
                  scaled x ticks=false,
                  every axis plot/.append style={ultra thick},
                  mark size=3pt ]
            \addplot [red,mark=x, error bars/.cd, y dir=both, y explicit,
                      error bar style={line width=2pt,solid},
                      error mark options={line width=1pt,mark size=4pt,rotate=90}]
                      table [x=x, y=y, y error=y-err]{%
                        x y y-err
0.2 3992.1484375 185.87587229840236
0.14285714285714285 2740.584228515625 130.4117699936297
0.1 1932.2010498046875 85.6514949592513
0.05 804.6022338867188 39.920418606213
0.02857142857142857 360.37225341796875 18.800138163225633
0.02 183.26800537109375 11.062993761160346
                      };
    \end{axis}
\end{tikzpicture}
}

\vskip -0.1in

\caption{Task loss $l(\hat{y}, y)$ under different LDP budgets for handwritten digit recognition. 
}
\label{fig:high_dim}
\end{center}
\end{figure*}

Our evaluation runs on a personal laptop with 2.7 GHz Intel Core I5 processor and 8-GB 1867 MHz DDR3 memory. Our code is based on Pytorch. We use the Adam optimizer and learning rate $10^{-3}$ for all the applications. The number of samples, train/test split, training epochs, and resulting runtime are summarized in Table \ref{tab:eval}. (Note that the evaluation for hourly household power consumption is based on the theoretical solutions, so ``training epochs", which is associated with the gradient-based method, doesn't apply.)
All three datasets cited in the evaluation are publicly-available from the standard UCI Machine Learning Repository \citep{uci} and anonymized using standard practices. The individual dataset licenses are not available. 

For task function $f$, we use a one-hidden-layer feedforward neural network with input size $n$, hidden size $1.5n$ and output size $1$ in both the real estate valuation and breast cancer detection experiments. The activation function used by the hidden layer and output layer is a Rectified Linear Unit (ReLU). In our experiments, we find that the chosen network architecture is good enough to yield near-zero loss with ground truth $x$ and $y$, and to \textbf{avoid overfitting} we don't choose a deep neural network. For example, we do not see any task improvement using a two layer network. 

For the encoder/decoder, we use a one-layer neural network (linear model) with input and output size $n$ in the real estate valuation experiment. For this experiment, a linear encoder/decoder model is already enough to provide good performance. We use one-hidden-layer feedforward neural network with input size $n$, hidden size $n$ and output size $n$ in the breast cancer detection experiment. The activation functions used by the hidden layer and output layer are a logistic and identity function, respectively.

For the gradient-based learning algorithm developed in Section \ref{subsec:general_settings}, we set $\eta = 0.2$ and $\eta = 0.001$ in real estate valuation and breast cancer detection experiments, respectively; and in both experiments, for each epoch we update $\theta_e$ and $\theta_d$ by 15 steps.

\subsection{A Proper $Z$ should be Determined on a Case-by-case Basis}\label{subsec:proper_z}

As mentioned in Section \ref{subsec:general_settings}, a practitioner may need to determine a proper $Z$ on a case-by-case basis for our task-aware approach. Fig. \ref{fig:proper_z} illustrates the performance of our task-aware approach under different $Z$'s for real estate valuation and breast cancer detection experiments. We can observe that in both two experiments, we obtained the best performance on average when $Z=3$, i.e., $n/2$ for real estate valuation and $n/10$ for breast cancer detection\footnote{The privacy-agnostic approach also achieves the best performance on average under the chosen $Z$'s.}.

\subsection{Experiment with High-dimensional Data}\label{subsec:high_dim}

To illustrate our task-aware approach in Section \ref{subsec:general_settings} also works well for high-dimensional data, such as image data, we consider a handwritten digit recognition problem with well known MNIST dataset \cite{lecun1998gradient}. Here, $x \in \reals^{784}$ represents a $28\times28$ image of handwritten digit. And $y \in \{0, 1, \cdots, 9\}$ is a discrete variable represents the digit in the image. We first train a convolutional neural network (CNN) classification model using the ground truth $x$ and $y$, to serve as our task function $f$. (The CNN classifier is composed of two consecutive convolution layers and a final linear layer. The number of input channels, the number of output channels, kernel size, stride and padding for two convolution layers are 1, 16, 5, 1, 2 and 16, 32, 5, 1, 2 respectively, and ReLU activation and max pooling with kernel size 2 are used after each convolution layer. The final linear layer has input size 1568 and output size 1.) Then we aim to minimize the cross-entropy loss of $\hat{y}$ and $y$, with linear encoder and decoder. We use $Z = 3$ for both our task-aware approach and the privacy-agnostic approach.

Fig. \ref{fig:high_dim} shows the evaluation result. Since the task loss $\loss$ of the task-agnostic approach is much larger than the other two approaches, we put it in a separate sub-figure on the left. On the right, we can see our task-aware approach always outperforms the privacy-agnostic approach on overall task loss $\loss$ under different LDP budgets, which demonstrates the effectiveness of our proposed solution. The maximum improvement against the privacy-agnostic approach is $21.3\%$ ($\epsilon = 20$).

\end{document}